\newtheorem{theorem}{Theorem}
\newtheorem{definition}{Definition}
\newtheorem{lemma}{Lemma}
\DeclareMathOperator*{\argmax}{arg\,max}
\def\clap#1{\hbox to 0pt{\hss#1\hss}}
\numberwithin{equation}{section}
\begin{document}

\begin{frontmatter}



\title{Relativistic Space-Charge Field Calculation by Interpolation-Based Treecode}


\author[desy,uhh]{Yi-Kai Kan\corref{author}}
\author[desy,uhh]{Franz X. Kärtner}
\author[tuhh]{Sabine Le Borne}
\author[tuhh]{Jens-Peter M. Zemke}

\address[desy]{Center for Free-Electron Laser Science CFEL, Deutsches Elektronen-Synchrotron DESY, Germany}
\address[uhh]{Department of Physics, University of Hamburg, Germany}
\address[tuhh]{Hamburg University of Technology, Institute of Mathematics, Germany}
\cortext[author] {Corresponding author.\\\textit{E-mail addresses:} yikai.kan@desy.de (Y.-K Kan), franz.kaertner@desy.de (F. X.~Kärtner), leborne@tuhh.de (S.~Le Borne), zemke@tuhh.de (J.-P. M.~Zemke)}

\begin{abstract}
Space-charge effects are of great importance in particle accelerator physics. In the computational modeling, tree-based methods are increasingly used because of their effectiveness in handling non-uniform particle distributions and/or complex geometries. However, they are often formulated using an electrostatic force which is only a good approximation for low energy particle beams. For high energy, \emph{i.e.}, relativistic particle beams, the relativistic interaction kernel may need to be considered and the conventional treecode fails in this scenario. In this work, we formulate a treecode based on Lagrangian interpolation for computing the relativistic space-charge field. Two approaches are introduced to control the interpolation error. In the first approach, a modified admissibility condition is proposed for which the treecode can be used directly in the lab-frame. The second approach is based on the transformation of the particle beam to the rest-frame where the conventional admissibility condition can be used. Numerical simulation results using both methods will be compared and discussed.
\end{abstract}

\begin{keyword}
Treecode, space-charge field calculation, separable approximation, admissibility condition, special relativity.
\end{keyword}

\end{frontmatter}


\section{Introduction}
Space-charge effects are very important in accelerator physics and can lead to many unwanted phenomena. For example, the space-charge force limits the intensity of the electron current emitted from the cathode inside the electron gun~\cite{luginsland2002beyond} and causes the broadening of ultrafast electron packets in the free-space propagation~\cite{siwick2002ultrafast,collin2005transverse}. In numerical simulations, grid-based methods have been the standard choice. Among all grid-based methods, the particle-in-cell (PIC) method is probably the most popular choice. PIC is a self-consistent model considering the field generation from the charged particle and the field-particle interaction~\cite{dawson1983particle}. In the framework of electromagnetic particle-in-cell (EM-PIC), the particle trajectory is used to obtain charge or current density over a spatial grid by a charge deposition scheme~\cite{esirkepov2001exact,umeda2003new}. With the charge and current density, the corresponding electromagnetic field is then evaluated by solving Maxwell's equations~\cite{yee1966numerical,liu1997pseudospectral}. In combination with a suitable numerical integrator~\cite{boris1970relativistic}, the solution of the particle field and a given external field are used to push the particles to their new states of motion. For non-relativistic particle beams, the electrostatic particle-in-cell (ES-PIC) method is usually used where Poisson's equation is solved to compute the particle field~\cite{birdsall2018plasma}. In addition to the electromagnetic model, the relativistic particle beam can be simulated by a quasi-static model~\cite{flottmann2003recent,qiang2017symplectic} which solves the electrostatic field in the rest-frame of the particle beam and applies the corresponding electromagnetic field in the lab-frame. However, PIC has several numerical issues despite of its popularity. For example, the standard EM-PIC based on the finite-difference time-domain (FDTD) method has numerical dispersion due to the approximation with finite-difference stencils~\cite{taflove2005computational}. The pseudo-spectral methods, \emph{e.g.} the pseudo-spectral time-domain (PSTD)~\cite{liu1997pseudospectral} method or the pseudo-spectral analytical time-domain (PSATD) method~\cite{haber1973advances} were used to mitigate this problem by evaluating the spatial derivative in the spectral domain. However, because of the usage of the fast Fourier transform (FFT), the pseudo-spectral method is computationally more demanding compared to FDTD and its performance cannot scale over many computing nodes~\cite{lehe2018review}. Besides, standard PIC uses a fixed size grid to discretize the spatial domain and is inefficient for non-uniform particle distributions~\cite{zhang2017fast}. 

The computation of the space-charge field can also be achieved by a direct $N$-body summation (also called brute-force method). One major advantage of this method is the consideration of coulomb collision effect, which is especially critical in some problems of the accelerator physics \cite{reiser2008theory,gordon2021point}. However, a direct $N$-body methods requires a computational cost of $\mathcal{O}(N^2)$ and may not be applicable if a large number of particles $N$ is considered. Therefore, many efforts have been devoted to the development of tree-based methods where particles are subdivided into a hierarchy of clusters and the hierarchical relation of each cluster is stored in a tree data structure called cluster tree. Treecodes~\cite{barnes1986hierarchical} rely on the approximation of particle-cluster interactions and have a complexity down to $\mathcal{O}(N\log N)$. The force field on each particle is computed through an independent tree traversal starting from the root cluster and the applicability of the particle-cluster approximation is determined by the multiple acceptance criterion (MAC) which is similar to the admissibility condition in the study of hierarchical (H-) matrices~\cite{boerm2010efficient,hackbusch2015hierarchical}. The fast multipole method (FMM)~\cite{greengard1987fast} can further reduce the complexity down to $\mathcal{O}(N)$ by considering the cluster-cluster interactions. In traditional FMMs, each cluster at the same level is covered by a cubic box of the same size and the cluster-cluster interaction list is determined by a cluster's neighbor boxes and its parent's neighbor boxes. Compared to treecodes, one major drawback of traditional FMMs is that the well-separation condition relies on bounding boxes of a fixed size. Such definition of the well-separation condition leads to two problems:
\begin{compactenum}
\item Unlike the MAC in treecode, the well-separation condition cannot be flexibly controlled.
\item It excludes the usage of tight bounding boxes (can be of rectangular shape) which have an adaptive size depending on the cluster and are favorable for non-uniform particle distribution.
\end{compactenum}
Therefore, traditional FMMs are inefficient to treat non-uniform particle distributions~\cite{capuzzo1998comparison}. There also exist hybrid FMMs merging the strengths from both treecode and traditional FMMs~\cite{warren1995portable,dehnen2002hierarchical,cheng1999fast}. One effort of hybrid FMMs is based on 
the dual tree traversal~\cite{warren1995portable,dehnen2002hierarchical} where the cluster-cluster interaction list is determined by traversing the source and target cluster trees simultaneously and the well-separation condition can be defined as flexible as the MAC.

There have been many efforts using tree-based methods to model Coulomb interaction in the study of non-relativistic charged particles, \emph{e.g.} plasma dynamics~\cite{pfalzner19943d,thomas2016treecode}, electron dynamics in ultrafast electron microscopy~\cite{zhang2015differential} and proton dynamics in synchrotrons~\cite{jones1998hybrid}. For the dynamics of relativistic electron beams, the relativistic interaction kernel needs to be considered as the particles' field lines get compressed in the transverse direction~\cite{jackson1999classical}. The evaluation of the relativistic kernel based on the brute-force method was used in some studies~\cite{sell2013attosecond,kochikov2019relativistic} and has been provided in some beam dynamics codes, \emph{e.g.}, the General Particle Tracer (GPT)~\cite{van1996general}. To the best of our knowledge, there are few studies on using tree-based methods to calculate the relativistic space-charge field; only some previous works proposed by Schmid \emph{et al.} can be found~\cite{schmid2018reptil,schmid2019simulating}. Their approach is based on the quasi-static model: the space-charge field is solved in the particle's inertial frame by using the FMM with a spherical harmonics expansion~\cite{kurzak2006fast}.  

In this study, we first introduce the general concept of a treecode. After that, based on the treecode proposed by Wang \emph{et al.}~\cite{wang2020kernel}, we formulate an interpolation-based treecode for computing the relativistic space-charge field. In particular, we propose two methods to control the interpolation error:
\begin{compactenum}
    \item Based on the analytic estimation of the interpolation error bound, a modified admissibility condition is derived so that the formulated treecode can be performed directly in the lab-frame.
    \item The system is first transformed to the rest-frame of the average particle momentum in which the particle field is computed by a treecode and is then transformed back to the lab-frame. By using the relativistic transformation to the rest-frame, the formulated treecode can work with the conventional admissibility condition.
\end{compactenum}
Our numerical results show that the treecode based on the modified admissibility condition has better accuracy than the treecode based on the relativity transformation when a particle beam with momentum spread is considered; an explanation is also provided. Besides, we demonstrate that the proposed treecode scales like $\mathcal{O}(N\log N)$.

\section{The idea of treecode}
In this section, we provide a 2D example to illustrate the idea of treecode. The example provided here can easily be generalized to 3D, all our computations are done in 3D.

In the $N$-body problem, the total force field $f$ experienced by a target point $\bm{x} \neq \bm{x}_j$ can be written as
\begin{equation}
    f(\bm{x}) = \sum_{j\in \widehat{S}} g(\bm{x}, \bm{x}_j)m_j
\label{eq:treecode:summation_force_field}
\end{equation}
where $j$ denotes the index of the source particle with the position $\bm{x}_j$ in a cluster $S$ and $\widehat{S}$ denotes the index set of the particles in $S$. Through an interaction kernel $g(\bm{x},\bm{x}_j)$, the source particle at $\bm{x}_j$ applies the force field with the magnitude proportional to a physical quantity $m_j$ of the particle.

When evaluating~\eqref{eq:treecode:summation_force_field}, we can divide the source particles contained in $S$ into several smaller clusters and split the summation corresponding to
\begin{equation}
    f(\bm{x}) = 
      \sum_{j\in \widehat{S}^{(1)}_{(1,1)}}g(\bm{x}, \bm{x}_j)m_j +
      \sum_{j\in \widehat{S}^{(1)}_{(2,1)}}g(\bm{x}, \bm{x}_j)m_j +
      \sum_{j\in \widehat{S}^{(1)}_{(1,2)}}g(\bm{x}, \bm{x}_j)m_j +
      \sum_{j\in \widehat{S}^{(1)}_{(2,2)}}g(\bm{x}, \bm{x}_j)m_j,
\end{equation}
\begin{figure}[th]
\centering
\begin{subfigure}[b]{0.4\textwidth}
    \includegraphics[width=\linewidth]{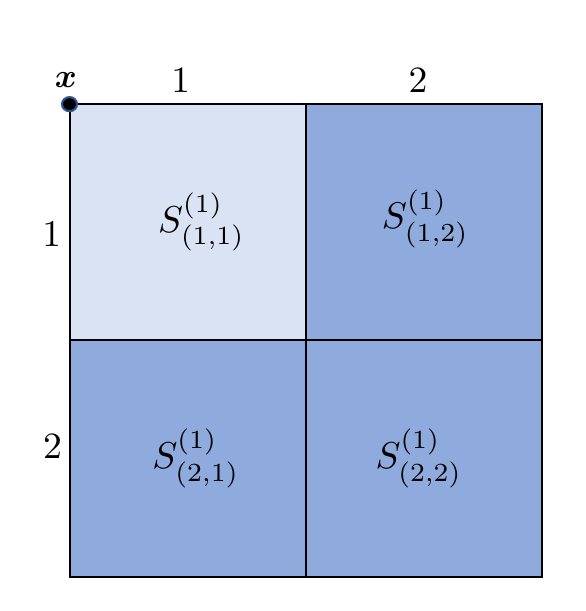}
    \caption{\label{fig:subdivision_illustration_a} $\ell=1$}
\end{subfigure}
\begin{subfigure}[b]{0.4\textwidth}
    \includegraphics[width=\linewidth]{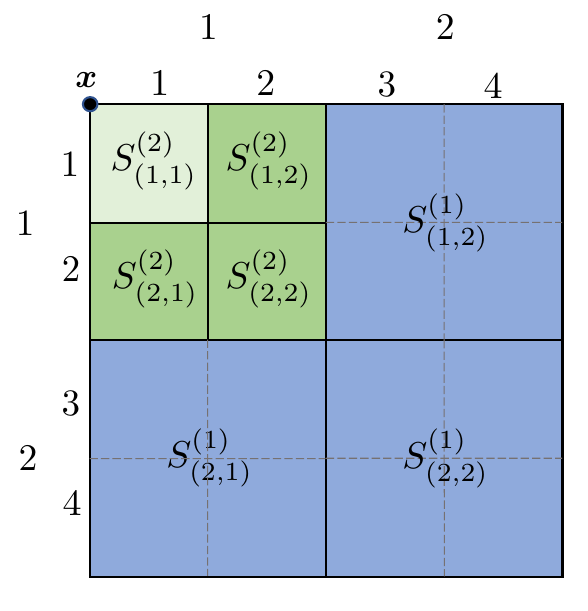}
    \caption{\label{fig:subdivision_illustration_b} $\ell=2$}
\end{subfigure}
\caption{Illustration of the subdivision of the particle cluster at the level (a) $\ell=1$ and (b) $\ell=2$. Here, the dark color denotes the far-field clusters and the light color denotes the near-field clusters.}
\label{fig:subdivision_illustration}
\end{figure}
where $S^{(1)}_{(1,1)}$, $S^{(1)}_{(2,1)}$, $S^{(1)}_{(1,2)}$ and $S^{(1)}_{(2,2)}$ are the sub-clusters located in the relative locations at the left-top, left-bottom, right-top, and right-bottom from the domain $Q_{S}$ of the cluster $S$, respectively (\hyperref[fig:subdivision_illustration_a]{Figure~\ref*{fig:subdivision_illustration_a}}). To express things in a general manner, we use $S^{(\ell)}_{(i,j)}$ to denote the sub-cluster from the $\ell$-th subdivision (also called level) of the root cluster $S$ with the location indices $i,j\in\{1,\ldots,2^{\ell}\}$.
If the distance from the target point $\bm{x}$ in $S^{(1)}_{(1,1)}$ to $S^{(1)}_{(2,1)}$, $S^{(1)}_{(1,2)}$, and $S^{(1)}_{(2,2)}$ is far enough (\emph{i.e.}, fulfills a sort of admissibility condition similar to the far-field condition) and the kernel function can be approximated by a separable expansion with a rank $N_r$ over the domain $Q_S$ of a cluster $S$ by
\begin{equation}\label{eq:separable_approximation}
  g(\bm{x}, \bm{x}_j)\approx \sum^{N_r}_{\nu=1}a_{S,\nu}(\bm{x})\cdot b_{S,\nu}(\bm{x}_j)\quad \bm{x}_j\in S,
\end{equation}
then the force field from the particles can be approximated by
\begin{align*}
    f(\bm{x})&=\sum_{j\in\widehat{S}^{(1)}_{(1,1)}}g(\bm{x},\bm{x}_j)m_j +
    \sum_{\tau\in T}\sum_{j\in\widehat{S}^{(1)}_{\tau}}g(\bm{x}, \bm{x}_j)m_j \\
    &\approx
    \sum_{j\in\widehat{S}^{(1)}_{(1,1)}}g(\bm{x},\bm{x}_j)m_j +
    \sum_{\tau\in T}\sum_{j\in\widehat{S}^{(1)}_{\tau}}\sum^{m}_{\nu=1}a_{S^{(1)}_{\tau},\nu}(\bm{x})\cdot b_{S^{(1)}_{\tau},\nu}(\bm{x}_j)m_j \\
    &=
    \sum_{j\in\widehat{S}^{(1)}_{(1,1)}}g(\bm{x},\bm{x}_j)m_j +
    \sum_{\tau\in T}\sum^{m}_{\nu=1}a_{S^{(1)}_{\tau},\nu}(\bm{x})\sum_{j\in\widehat{S}^{(1)}_{\tau}}b_{S^{(1)}_{\tau},\nu}(\bm{x}_j)m_j\\
    &=
    \sum_{j\in\widehat{S}^{(1)}_{(1,1)}}g(\bm{x},\bm{x}_j)m_j +
    \sum_{\tau\in T}\sum^{m}_{\nu=1}a_{S^{(1)}_{\tau},\nu}(\bm{x})\cdot m_{S^{(1)}_{\tau},\nu}
\end{align*}
with 
\begin{equation}\label{eq:macro_mass}
    m_{S^{(\ell)}_{\tau},\nu}:=\sum_{j\in\widehat{S}^{(\ell)}_{\tau}}b_{S^{(\ell)}_{\tau},\nu}(\bm{x}_j)m_j.
\end{equation}
Here, $T=\{(2,1),(1,2),(2,2)\}$ is a set of indices for the far-field clusters at each level. In this illustrative example, the target point $\bm{x}$ is in the top-left corner of the domain such that the indices of far-field clusters in each level all belong to the set $T=\{(2,1),(1,2),(2,2)\}$. In general, depending on the position of the target point, the index of a far-field cluster can be $(i,j)$ with $i,j\in\{1,\ldots,2^{\ell}\}$. The effective physical quantity of the macro particle with the index $\nu$ is denoted by $m_{S^{(\ell)}_{\tau},\nu}$ in~\eqref{eq:macro_mass}. The physical meaning of $m_{S^{(\ell)}_{\tau},\nu}$ is that the particles in a cluster are aggregated to a few macro particles; during the evaluation of the force field on the target particle, instead of traversing each real particle in the source cluster, we only consider a few macro particles if the approximation \eqref{eq:separable_approximation} is accurate, which is typically the case if the distance to the cluster is large enough.

To evaluate the force field from the source particles inside the cluster $S^{(1)}_{(1,1)}$, we can apply a similar trick as before. We first subdivide each sub-cluster from the level 1 into four sub-clusters ($S^{(2)}_{(1,1)}$, $S^{(2)}_{(2,1)}$, $S^{(2)}_{(1,2)}$, and $S^{(2)}_{(2,2)}$) and then use them to compute the approximated field (\hyperref[fig:subdivision_illustration_b]{Figure~\ref*{fig:subdivision_illustration_b}}). Assume that at each level $\ell$ of the subdivision only one cluster $S^{(\ell)}_{(1,1)}$ does not fulfill the far-field condition. We can compute the force field from $S^{(\ell)}_{(2,1)}$, $S^{(\ell)}_{(1,2)}$, and $S^{(\ell)}_{(2,2)}$ by the far-field approximation and subdivide $S^{(\ell)}_{(1,1)}$ to get the sub-clusters of the next level $\ell+1$. This procedure can be applied repeatedly until a maximum level $\kappa$ is reached and the approximated force field can be computed by
\begin{equation}
  \label{eq:intro_p2cluster_evaluation}
  f(\bm{x})\approx
  \underbrace{\sum_{j\in\widehat{S}^{(\kappa)}_{(1,1)}}g(\bm{x},\bm{x}_j)m_j}_{\text{near-field}} + 
  \underbrace{\sum^{\kappa}_{\ell=1}\sum_{\tau\in T}\sum^{N_r}_{\nu=1}a_{S^{(\ell)}_{\tau},\nu}(\bm{x})\cdot m_{S^{(\ell)}_{\tau},\nu}}_{\text{far-field}}.
\end{equation}
This means that the evaluation of the force-field can be decomposed into near-field and far-field terms. The far-field term is computed approximately by the effective physical quantities of the macro particles inside the far-field clusters of different levels. The near-field term is evaluated directly from the physical quantities of micro particles inside the near-field clusters. 

Consider a root cluster containing $N$ uniformly distributed particles.  
If every cluster of the finest level contains $N_0$ particles (\emph{i.e.}, $|\widehat{S}^{(\kappa)}_{(i,j)}|=N_0$ for all $i,j\in\{1,\ldots,2^{\kappa}\}$), the number of subdivisions is given by
\begin{equation*}
    \kappa=\dfrac{1}{2\log2}\log\left(\dfrac{N}{N_0}\right).
\end{equation*}
From a practical point of view it is reasonable to choose $N_0=N_r$ so that the number of macro particles $N_r$ in the clusters~\eqref{eq:separable_approximation} is not less than $N_0$. Since there are three far-field clusters at each level and one near-field cluster at the finest level, we need to traverse $1+3\kappa$ clusters to evaluate the field at $\bm{x}$. Therefore, the number of terms to evaluate~\eqref{eq:intro_p2cluster_evaluation} becomes
\begin{equation*}
    N_r\cdot(1+3\kappa) = N_r+\dfrac{3}{2\log 2}\cdot N_r\cdot\log\left(\dfrac{N}{N_r}\right).
\end{equation*}

If $N_r$ is bounded and independent of $N$, the computation cost for evaluating the force field experienced by a single particle becomes $\mathcal{O}\left(\log N\right)$. This argument can be applied to each particle in the root cluster $S$ and the total computation cost for evaluating the interaction force of $N$ particles in $S$ is $\mathcal{O}\left(N\log N\right)$.

From the example discussed above, we can see that the success of treecode relies on the hierarchical subdivision of particles into clusters and the approximation of the kernel function through a separable expansion. Thus, in the following sections, the following two main questions will be discussed:
\begin{compactitem}
    \item How can we subdivide the particles into a set of hierarchical clusters?
    \item How can we construct a separable approximation of a kernel function and bound the error of this approximation? 
\end{compactitem}

\section{Cluster tree}
A cluster tree~\cite{boerm2010efficient,hackbusch2015hierarchical} is a space-partition data structure (similar to the k-d tree~\cite{bentley1975multidimensional}) which provides an efficient way for finding the interaction list in the tree-based method. In this section, we briefly introduce the terminology and the construction of the cluster tree.

\begin{definition}[Cluster Tree]
A cluster tree is a labeled tree associated with an index set $I$ and fulfills the following requirements:
\begin{compactitem}
    \item For the root node $r$, its label set $\widehat{r}$ is given by the index set $I$, \emph{i.e.,} $\widehat{r}=I$.
    \item If $s$ is a non-leaf node, its label set is the union of the label sets of its children nodes (\emph{i.e.,} $\widehat{s}=\bigcup_{s'\in\mathrm{children}(s)}\widehat{s'}$).
    \item If $t',s'\in \text{children}(s)$, then $\widehat{s'}\bigcap\widehat{t'}=\emptyset$.
\end{compactitem}
We call the label sets associated with nodes of the tree (index)  \textit{clusters}.
\end{definition}

Let $S=\{\bm{x}_1,\ldots\bm{x}_j\}\subset\mathbb{R}^3$ be given by a set of particle positions (which we also call a \textit{cluster} of particles) with index set $\widehat{S}=\{1,\ldots,j\}$. We denote the entries in a vector by $\bm{x}=(x,y,z)^T$. The tightest rectangular box covering $S$ is called the \emph{bounding box} and is defined as
\begin{equation*}
    \text{bbox}(S):=[a_x,b_x]\times[a_y,b_y]\times[a_z,b_z],
\end{equation*}
with $a_{g}=\min_{i\in\widehat{S}}\{g_i\}$, $b_{g}=\max_{i\in\widehat{S}}\{g_i\}$ and $g\in\{x,y,z\}$. To subdivide a cluster $S$ containing $j$ particles, we first determine $\text{bbox}(S)$ which enables us to find the coordinate direction with the biggest interval $k=\argmax_{g\in\{x,y,z\}} (b_g - a_g)$. We then choose the particle position with the $\lfloor j/2\rfloor$\nobreakdash-th greatest value in the $k$-component of its position as the splitting point $\bm{x}_{\text{split}}$ such that the cluster $S$ can be subdivided into two sub-clusters
\begin{equation*}
  S_1=\{\bm{x}_i \mid k_i\leq k_{\text{split}},\ i\in\widehat{S}\}
  \quad\text{and}\quad 
  S_2=\{\bm{x}_i \mid k_i> k_{\text{split}},\ i\in\widehat{S}\},\quad
  S=S_1\bigcup S_2.
\end{equation*}
We can apply this subdivision procedure to each of the resulting sub-clusters repeatedly until the number of particles in the clusters at the deepest level is smaller than a pre-selected number $N_0$. Different to what has been used in the illustrative example of the previous section, this \replaced[id=A-2]{cardinality-balanced}{cardinality-based} subdivision strategy is used in the remainder of this study since it has the following benefits:
\begin{compactitem}
    \item Each subdivided cluster contains roughly the same number of particles so that the cluster tree is balanced regardless of the particle distribution.
    \item The subdivision based on the coordinate direction with the biggest interval can shrink the diameter of the cluster fast.
\end{compactitem}
 
\section{Approximation for the kernel function of the space-charge field}\label{kernel_function_apprximation}
The space-charge field from a relativistic particle at the position $\bm{x}_j$ experienced by a target position $\bm{x}$ can be expressed as~\cite{jackson1999classical,sell2013attosecond}
\begin{equation*}
    \bm{E}(\bm{x},\bm{x}_j, \bm{p}_j)=\dfrac{q}{4\pi\varepsilon_0}\gamma_j\bm{k}(\bm{x},\bm{x}_j,\bm{p}_j)
    \quad\text{and}\quad
    \bm{B}(\bm{x}, \bm{x}_j,\bm{p}_j)=\dfrac{q}{4\pi\epsilon_0 c_0} 
    \bm{p}_j\times \bm{k}(\bm{x}, \bm{x}_j,\bm{p}_j)
\end{equation*}
with
\begin{equation}\label{eq:treecode_exact_k_kernel}
    \bm{k}(\bm{x},\bm{x}_j,\bm{p}_j):=
    \dfrac{\bm{x}-\bm{x}_j}{\bigl| \Vert\bm{x}-\bm{x}_j\Vert^2_2 + \left(\bm{p}_j\cdot(\bm{x}-\bm{x}_j)\right)^2\bigr|^{3/2}}.
\end{equation}
Here, $\gamma_j=1/\sqrt{1-\Vert\bm{\beta}_j\Vert^2_2}$ and $\bm{p}_j=\gamma_j\bm{\beta}_j$ are the Lorentz factor and normalized momentum of the particle, respectively, with $\bm{\beta}_j$ the particle velocity normalized to the speed of light. Although not necessary, throughout this study, we assume that each particle in the system has the same charge $q$. In the physics of particle beams moving in the $z$-direction, the paraxial approximation can be usually applied
\begin{equation*}
    \overline{p}_z \gg \overline{p}_{\{x,y\}}
    \quad\text{and}\quad
    \overline{p}_{\{x,y,z\}} \gg \Delta p_{\{x,y,z\}}
\end{equation*}
where $\overline{\bm{p}} := \frac{1}{N}\sum^{N}_{j=1} \bm{p}_j$, $\bm{p}_{j} =: \overline{\bm{p}} + \Delta\bm{p}_{j}$ and $N$ is the number of particles.
Therefore, the kernel function can be approximated as
\begin{align}
    \bm{k}(\bm{x}, \bm{x}_j,\bm{p}_j)
&=\dfrac{\bm{x}-\bm{x}_j}{\bigl| \Vert\bm{x}-\bm{x}_j\Vert^2_2 + \left((\overline{\bm{p}}+\Delta\bm{p}_j)\cdot(\bm{x}-\bm{x}_j)\right)^2\bigr|^{3/2}} \nonumber\\
&\approx \dfrac{\bm{x}-\bm{x}_j}{\bigl| \Vert\bm{x}-\bm{x}_j\Vert^2_2 + \left(\overline{\bm{p}}\cdot(\bm{x}-\bm{x}_j)\right)^2\bigr|^{3/2}}.\label{eq:approx_k}
\end{align}
Since $\overline{p}_z\gg \overline{p}_{\{x,y\}}$, we may use $\overline{p}_z \approx \Vert\overline{\bm{p}}\Vert_2$ to further simplify equation \eqref{eq:approx_k} as in (cf.~\hyperref[appendx:approx_kernel_denominator]{Section~\ref*{appendx:approx_kernel_denominator}})
\begin{align}
    \bm{k}(\bm{x}, \bm{x}_j,\bm{p}_j)
    &\approx\dfrac{\bm{x}-\bm{x}_j}{\biggl((x-x_j)^2+(y-y_j)^2+(1+\overline{p}^2_z)(z-z_j)^2\biggr)^{3/2}}\nonumber\\
    &\approx\dfrac{\bm{x}-\bm{x}_j}{\biggl((x-x_j)^2+(y-y_j)^2+\overline{\gamma}^{2}(z-z_j)^2\biggr)^{3/2}}=:\bm{g}(\bm{x},\bm{x}_j) \label{eq:kernel_g}
\end{align}
where $\overline{\gamma}^2:=1+\Vert\overline{\bm{p}}\Vert^2_2$. The space-charge field from the source particle can be approximated by
\begin{equation*}
    \bm{E}(\bm{x}, \bm{x}_j) \approx 
    \dfrac{q}{4\pi\epsilon_0}
    \gamma_j\bm{g}(\bm{x}, \bm{x}_j)
    \quad\text{and}\quad 
    \bm{B}(\bm{x}, \bm{x}_j) \approx  
    \dfrac{q}{4\pi\epsilon_0 c_0}
    \bm{p}_j\times \bm{g}(\bm{x}, \bm{x}_j).
\end{equation*}

The separable approximation of the kernel function can be achieved by a tensor product interpolation with Lagrangian polynomials~\cite{boerm2010efficient,hackbusch2015hierarchical}
\begin{align*}
    \widetilde{\bm{g}}(\bm{x}, \bm{x}_j) 
    &= \sum^n_{\nu_1=0}\sum^n_{\nu_2=0}\sum^n_{\nu_3=0}\bm{g}\bigl(\bm{x},(\xi^{x}_{\nu_1}, \xi^{y}_{\nu_2}, \xi^{z}_{\nu_3})\bigr)\cdot\ell^{x}_{\nu_1}(x_j)\cdot\ell^{y}_{\nu_2}(y_j)\cdot\ell^{z}_{\nu_3}(z_j)\\
    &=:\sum_{\bm{\nu}\in\{0,\ldots,n\}^3}\bm{g}(\bm{x},\bm{\xi}_{\bm{\nu}})\ell_{\bm{\nu}}(\bm{x}_j),
\end{align*}
where the Lagrange basis polynomials are defined as
\begin{equation*}
    \ell_{j}^{k}(x) := \prod^{n}_{\substack{i=0\\i\neq j}}\dfrac{x - \xi_{i}^{k}}{\xi_{j}^{k} - \xi_{i}^{k}}, \quad j=1,\ldots,n, \quad k\in\{x,y,z\}
\end{equation*}
with the interpolation points $\xi_i^k$ and the polynomial degree $n$. 

The error bound of the tensor product interpolation of the kernel function with respect to the source point variable $\bm{x}_j$ in a rectangular domain $Q_{S}=[a_x,b_x]\times[a_y,b_y]\times[a_z,b_z]\subset \mathbb{R}^3$ is~\cite{hackbusch2015hierarchical}
\begin{equation}\label{eq:interpolation_error}
    \Vert \bm{g}(\bm{x},\cdot) - \widetilde{\bm{g}}(\bm{x},\cdot) \Vert_{\infty,Q_{S}}
    \leq 
    \text{const}\cdot\sum_{k\in\{x,y,z\}} (b_k - a_k)^{n+1}\cdot\dfrac{\Vert \partial^{n+1}_{k}\bm{g}(\bm{x},\cdot)\Vert_{\infty,Q_{S}}}{(n+1)!}.
\end{equation}
As indicated by~\eqref{eq:interpolation_error}, to control the interpolation error, we need to find a condition which guarantees a bound for the right-hand-side terms. In the literature, this condition is called admissibility condition~\cite{boerm2010efficient,hackbusch2015hierarchical}. In particular, when determining an admissibility condition for a kernel function $g(\bm{x},\cdot)$, we need to find a bound on the term  $\Vert \partial^{n+1}_{k}\bm{g}(\bm{x},\cdot)\Vert_{\infty,Q_{S}}$.

\section{Admissibility Condition for the Relativistic Kernel}
As mentioned in \hyperref[kernel_function_apprximation]{Section~\ref*{kernel_function_apprximation}}, to derive an admissibility condition for a kernel function, it is necessary to bound the higher-order derivatives of the kernel function. For studying the free space electrostatic problem, the force kernel can be written
\begin{equation}
    \bm{f}(\bm{x},\bm{x}_j):=
    \dfrac{\bm{x}-\bm{x}_j}{\Vert\bm{x}-\bm{x}_j\Vert^3_2}.
\end{equation}
An upper bound of the higher-order derivatives with respect to $\bm{x}_j$ is given by~\cite{hackbusch2015hierarchical}
\begin{equation*}
    \Vert\partial^{n+1}_{i}\bm{f}(\bm{x},\cdot)\Vert_{\infty,Q_{S}}\leq \dfrac{\text{const}}{\text{dist}(x,S)^2}\dfrac{(n+1)!}{\text{dist}(\bm{x},S)^{n+1}}\quad i\in\{x,y,z\}
\end{equation*}
and the substitution into ~\eqref{eq:interpolation_error} yields
\begin{equation}\label{eq:error_bound_es_kernel}
    \Vert \bm{f}(\bm{x}, \cdot) - \tilde{\bm{f}}(\bm{x}, \cdot) \Vert_{\infty,Q_{S}}
    \leq 
    \dfrac{\text{const}}{\text{dist}(\bm{x},S)^2}
    \left(\dfrac{\text{diam}(S)}{\text{dist}(\bm{x},S)}\right)^{n+1}.
\end{equation}
Here, we use the definitions
\begin{equation}\label{eq:definition_diam_dist}
    \text{diam}(S):=\max_{\bm{x}_i,\bm{x}_j\in S}\Vert \bm{x}_i-\bm{x}_j \Vert_2
    \quad\text{and}\quad
    \text{dist}(\bm{x},S):=\min_{\bm{x}_j\in S} \Vert \bm{x}-\bm{x}_j\Vert_2
\end{equation}
to denote the diameter of $S$ and the distance from $\bm{x}$ to $S$.

From \eqref{eq:error_bound_es_kernel}, we can define an admissibility condition (also called $\eta$-admissibility~\cite{hackbusch2015hierarchical}) for the electrostatic kernel by
\begin{equation*}
   \dfrac{\text{diam}(S)}{\text{dist}(\bm{x},S)} < \eta
\end{equation*}
with some admissibility parameter $\eta\in\mathbb{R}_{>0}$. 

However, the admissibility condition of the electrostatic kernel (called conventional admissibility condition in this study) may not be used for the relativistic kernel~\eqref{eq:kernel_g}. Thus, we will derive an upper bound of the interpolation error for the relativistic kernel and derive a corresponding admissibility condition. We first introduce a theorem which is useful for estimating the upper bound of the higher-order derivatives of a specific type of kernel function.
\begin{theorem}[\protect{\cite[Theorem~E.4]{hackbusch2015hierarchical}}]
\label{thm:norm_kernel_derivative}
Let $s(\bm{x}, \bm{x}_j)=1/\Vert\bm{x}-\bm{x}_j\Vert^{a}_{2}$ with $a\in\mathbb{R}_{>0}$, we have
\begin{equation*}
    |\partial^{\bm{\nu}}_{\bm{x}}s(\bm{x},\bm{x}_j)|\leq \bm{\nu}!w^{a/2+|\bm{\nu}|}\Vert\bm{x}-\bm{x}_j \Vert^{-|\bm{\nu}|-a}_2\quad \forall\bm{x},\bm{x}_j\in\mathbb{R}^d \land \bm{x}\neq\bm{x}_j,
\end{equation*}
with a suitable constant $w$. Here, $\bm{\nu}=(\nu_1,\ldots,\nu_d)\in\mathbb{N}^d$ and the following conventions are used \cite{hackbusch2015hierarchical}:
\begin{equation*}
    |\bm{\nu}|=\sum^{d}_{i=1}\nu_i,\quad 
    \bm{\nu}!=\prod^{d}_{i=1}\nu_{i}!,\quad \partial^{\bm{\nu}}_{\bm{x}}=\prod^{d}_{i=1}\Bigl(\dfrac{\partial}{\partial x_i}\Bigr)^{\nu_i}.
\end{equation*}
Since $s(\bm{x}, \bm{x}_j)$ is symmetric with respect to $\bm{x}$ and $\bm{x}_j$, the conclusion above also holds for the derivatives with respect to $\bm{x}_j$.
\end{theorem}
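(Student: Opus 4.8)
The plan is to prove the estimate by analytic continuation followed by a Cauchy estimate on a polydisc. Since $s(\bm{x},\bm{x}_j)$ depends on $\bm{x}$ and $\bm{x}_j$ only through the difference $\bm{y}:=\bm{x}-\bm{x}_j$, I would fix $\bm{x}_j$ and set $\rho:=\Vert\bm{y}\Vert_2>0$; it then suffices to bound $\partial^{\bm{\nu}}_{\bm{z}}h(\bm{0})$, where $h(\bm{z}):=\bigl(P(\bm{z})\bigr)^{-a/2}$ with $P(\bm{z}):=\sum_{i=1}^{d}(y_i+z_i)^2$ is regarded as a function of a complex variable $\bm{z}=(z_1,\ldots,z_d)$ in a neighbourhood of the origin, because $\partial^{\bm{\nu}}_{\bm{x}}s(\bm{x},\bm{x}_j)=\partial^{\bm{\nu}}_{\bm{z}}h(\bm{0})$. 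The first step is to exhibit a polydisc $D_r:=\{\bm{z}\in\mathbb{C}^d:|z_i|\le r,\ i=1,\ldots,d\}$ on which $P$ stays off the branch cut $(-\infty,0]$ of $w\mapsto w^{-a/2}$, so that $h$ is holomorphic there.

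For that I would expand $P(\bm{z})=\rho^2+2\,\bm{y}\cdot\bm{z}+\sum_i z_i^2$ and, for $\bm{z}\in D_r$, estimate $|2\,\bm{y}\cdot\bm{z}|\le 2r\Vert\bm{y}\Vert_1\le 2\sqrt{d}\,\rho\,r$ (Cauchy--Schwarz) and $\bigl|\sum_i z_i^2\bigr|\le d\,r^2$, so that $|P(\bm{z})-\rho^2|\le 2\sqrt{d}\,\rho\,r+d\,r^2$. Choosing $r:=\rho/(4\sqrt{d})$ makes the right-hand side at most $\tfrac{9}{16}\rho^2<\rho^2$, hence $\mathrm{Re}\,P(\bm{z})>0$ and $|P(\bm{z})|\ge\tfrac14\rho^2$ on $D_r$; consequently $h$ is holomorphic on $D_r$ with $|h(\bm{z})|=|P(\bm{z})|^{-a/2}\le(\tfrac14\rho^2)^{-a/2}=2^{a}\rho^{-a}$ there.

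The second step is the multivariate Cauchy estimate: applying the one-dimensional Cauchy inequality coordinate by coordinate gives $|\partial^{\bm{\nu}}_{\bm{z}}h(\bm{0})|\le\bm{\nu}!\,r^{-|\bm{\nu}|}\max_{\bm{z}\in D_r}|h(\bm{z})|\le\bm{\nu}!\,(4\sqrt{d})^{|\bm{\nu}|}\,2^{a}\,\rho^{-|\bm{\nu}|-a}$. It then remains only to absorb the two prefactors into a single power $w^{a/2+|\bm{\nu}|}$: taking $w:=4\sqrt{d}$ (so $w\ge4$ for $d\ge1$) gives $w^{a/2}\ge 4^{a/2}=2^{a}$ and $w^{|\bm{\nu}|}=(4\sqrt{d})^{|\bm{\nu}|}$, whence $2^{a}(4\sqrt{d})^{|\bm{\nu}|}\le w^{a/2+|\bm{\nu}|}$ and the claimed bound follows with $w=4\sqrt{d}$, a constant depending only on $d$. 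The concluding assertion about derivatives in $\bm{x}_j$ is immediate, since $\Vert\bm{x}-\bm{x}_j\Vert_2=\Vert\bm{x}_j-\bm{x}\Vert_2$ and $\partial^{\bm{\nu}}_{\bm{x}_j}s$ differs from $\partial^{\bm{\nu}}_{\bm{x}}s$ only by the sign $(-1)^{|\bm{\nu}|}$.

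I expect the crux to be the first step: pinning down a polydisc radius that simultaneously keeps $P(\bm{z})$ away from the branch cut (guaranteeing $h$ is single-valued and holomorphic) and is large enough that the subsequent Cauchy estimate produces the exponent $-|\bm{\nu}|-a$ rather than something weaker, together with the minor bookkeeping of collapsing $2^{a}$ and $(4\sqrt{d})^{|\bm{\nu}|}$ into the single factor $w^{a/2+|\bm{\nu}|}$. Everything else — the coordinatewise Cauchy inequality and the symmetry remark — is routine.
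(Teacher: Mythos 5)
Your proof is correct, but note that the paper itself offers no proof to compare against: Theorem~\ref{thm:norm_kernel_derivative} is imported verbatim from \cite[Theorem~E.4]{hackbusch2015hierarchical} and used as a black box. Your complex-analytic route --- extending $s$ holomorphically via $h(\bm{z})=P(\bm{z})^{-a/2}$, $P(\bm{z})=\sum_i(y_i+z_i)^2$, onto a polydisc of radius $r=\rho/(4\sqrt{d})$ and then applying the coordinatewise Cauchy inequality --- is a legitimate and self-contained alternative to the real-variable induction on $|\bm{\nu}|$ (via recursions for derivatives of $t\mapsto t^{-a/2}$ composed with $\|\cdot\|_2^2$) that underlies the cited textbook result. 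The numerics check out: $|P(\bm{z})-\rho^2|\le 2\sqrt{d}\rho r+dr^2=\tfrac{9}{16}\rho^2$ gives $\mathrm{Re}\,P\ge\tfrac{7}{16}\rho^2>0$ and $|P|\ge\tfrac14\rho^2$, so $h$ is a well-defined composition with the principal branch, $|h|=|P|^{-a/2}\le 2^a\rho^{-a}$, and the Cauchy estimate yields $\bm{\nu}!\,(4\sqrt d)^{|\bm{\nu}|}2^a\rho^{-|\bm{\nu}|-a}$, which is absorbed into $w^{a/2+|\bm{\nu}|}$ with $w=4\sqrt d\ge 4$. Two small points you should make explicit: the strict inequality $\tfrac{9}{16}<1$ leaves room to enlarge the polydisc slightly, so $h$ is holomorphic on a \emph{neighbourhood} of the closed polydisc as the Cauchy formula requires; and $|w^{-a/2}|=|w|^{-a/2}$ holds for the principal branch because the argument contributes only a unimodular factor. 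What your approach buys is an explicit, dimension-only constant $w=4\sqrt d$ uniform in $a$ and $\bm{\nu}$ with very little bookkeeping; what the real-variable induction buys is avoidance of several complex variables and a statement that generalizes more readily to kernels that are not restrictions of holomorphic functions. The closing symmetry remark, via $\partial^{\bm{\nu}}_{\bm{x}_j}s=(-1)^{|\bm{\nu}|}\partial^{\bm{\nu}}_{\bm{x}}s$, is correct.
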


The force kernel for the relativistic space-charge field is \eqref{eq:kernel_g}
\begin{equation*}
    \bm{g}(\bm{x},\bm{x}_j)=\dfrac{\bm{x}-\bm{x}_j}{\biggl((x-x_j)^2+(y-y_j)^2+\overline{\gamma}^{2}(z-z_j)^2\biggr)^{3/2}}.
\end{equation*}
By introducing a change of variables 
\begin{equation*}
    \begin{cases}
        \overline{x}=x,\\
        \overline{y}=y,\\
        \overline{z}=\overline{\gamma}z,
    \end{cases}
    \quad
    \text{and}
    \quad
    \begin{cases}
        \overline{x_j}=x_j,\\
        \overline{y_j}=y_j,\\
        \overline{z_j}=\overline{\gamma}z_j,
    \end{cases}
\end{equation*}
the kernel function $\bm{g}$ can be expressed component-wise as
\begin{equation*}
\begin{split}
    (g_x,g_y,g_z) 
    &= \dfrac{\bigl(\overline{x}-\overline{x_j},\overline{y}-\overline{y_j},\tfrac{1}{\overline{\gamma}}(\overline{z}-\overline{z_j})\bigr)}{\left((\overline{x}-\overline{x_j})^2+(\overline{y}-\overline{y_j})^2+(\overline{z}-\overline{z_j})^2\right)^{3/2}}\\
    &=(\partial_{\overline{x_j}}, \partial_{\overline{y_j}}, \tfrac{1}{\overline{\gamma}}\partial_{\overline{z_j}})\phi\quad\text{with}\quad \phi:=\dfrac{1}{\Vert \overline{\bm{x}}-\overline{\bm{x}_j}\Vert_{2}}.
\end{split}    
\end{equation*}
Therefore, the norm of the derivative of $\bm{g}$ along each coordinate direction is 
\begin{equation*}
\begin{split}
    &\Vert\partial^{n+1}_{x_j} \bm{g}\Vert_{\infty} 
        =\max(|\partial^{n+1}_{\overline{x_j}}\partial_{\overline{x_j}}\phi|, |\partial^{n+1}_{\overline{x_j}}\partial_{\overline{y_j}}\phi|, |\partial^{n+1}_{\overline{x_j}}\partial_{\overline{z_j}}\phi| ), \\
    &\Vert\partial^{n+1}_{y_j} \bm{g}\Vert_{\infty} 
        =\max(|\partial^{n+1}_{\overline{y_j}}\partial_{\overline{x_j}}\phi|, |\partial^{n+1}_{\overline{y_j}}\partial_{\overline{y_j}}\phi|, |\partial^{n+1}_{\overline{y_j}}\partial_{\overline{z_j}}\phi| ), \\
    &\Vert\partial^{n+1}_{z_j} \bm{g}\Vert_{\infty} 
        =\overline{\gamma}^{n+1}\cdot\max(|\partial^{n+1}_{\overline{z_j}}\partial_{\overline{x_j}}\phi|,
        |\partial^{n+1}_{\overline{z_j}}\partial_{\overline{y_j}}\phi|, \tfrac{1}{\overline{\gamma}}|\partial^{n+1}_{\overline{z_j}}\partial_{\overline{z_j}}\phi| ).
\end{split}
\end{equation*}
Using \hyperref[thm:norm_kernel_derivative]{Theorem~\ref*{thm:norm_kernel_derivative}} with $a=1$ and  $\nu=n+1$, the upper bounds of the norms of the derivatives of $\bm{g}$ are
\begin{equation*}
\begin{split}
    &\Vert\partial^{n+1}_{x_j} \bm{g}\Vert_{\infty} \leq 
        \text{const}\cdot(n+1)!\cdot\Vert \overline{\bm{x}}-\overline{\bm{x}_j}\Vert^{-(n+3)}_2,\\
    &\Vert\partial^{n+1}_{y_j} \bm{g}\Vert_{\infty} \leq 
        \text{const}\cdot(n+1)!\cdot\Vert \overline{\bm{x}}-\overline{\bm{x}_j}\Vert^{-(n+3)}_2,\\
    &\Vert\partial^{n+1}_{z_j} \bm{g}\Vert_{\infty} \leq
        \text{const}\cdot(n+1)!\cdot\overline{\gamma}^{n+1}\cdot\Vert \overline{\bm{x}}-\overline{\bm{x}_j}\Vert^{-(n+3)}_2.\\
\end{split}
\end{equation*}
After the substitution of the upper bounds above into ~\eqref{eq:interpolation_error}, the interpolation error becomes
\begin{equation*}
    \begin{split}
    \Vert \bm{g}(\bm{x}, \cdot) - \widetilde{\bm{g}}(\bm{x}, \cdot) \Vert_{\infty,Q_{S}}
    &\leq \text{const}\cdot\sum_{k\in\{x,y,z\}} s^{n+1}_{k}(b_k - a_k)^{n+1}\cdot\dfrac{1}{\Vert                          \overline{\bm{x}}-\overline{\bm{x}_j}\Vert^{n+3}_{\infty}}\\
    &= \text{const}\cdot\sum_{k\in\{x,y,z\}} (\overline{b}_k -            \overline{a}_k)^{n+1}\cdot\dfrac{1}{\Vert                          \overline{\bm{x}}-\overline{\bm{x}_j}\Vert^{n+3}_{\infty}}\\
    &\leq \dfrac{\text{const}}{\text{dist}(\overline{x},\overline{S})^2}\dfrac{\text{diam}(\overline{S})^{n+1}}{\text{dist}(\overline{\bm{x}},\overline{S})^{n+1}}
    \end{split}
\end{equation*}
with a three dimensional stretch factor $\bm{s}=(1,1,\gamma)$ and $\overline{S}:=\{\bm{s}\circ\bm{x}_j\,|\,\forall \bm{x}_j\in S\}$ the set of stretched positions of particle from $S$. Here, the symbol $\circ:\mathbb{R}^n\times\mathbb{R}^n\to\mathbb{R}^n$ denotes the component-wise product of two vectors. In the derivation above, the equivalence of the two norm and the infinity norm has been used~\cite{conrad2018equivalence}.

To express the admissibility condition in the original variables, we define the stretched diameter and the stretched distance 
\begin{equation*}
\begin{split}
    \overset{\scriptscriptstyle{(s_x,s_y,s_z)}}{\text{diam}}(S)
        &:=\max_{\bm{x}_i,\bm{x}_j\in S} \Vert(s_x,s_y,s_z)\circ(\bm{x}_i-\bm{x}_j)\Vert_2,\\
    \overset{\scriptscriptstyle{(s_x,s_y,s_z)}}{\text{dist}}(\bm{x},S)
        &:=\min_{\bm{x}_j\in S} \Vert  (s_x,s_y,s_z)\circ(\bm{x}-\bm{x}_j)\Vert_2,\\
\end{split}
\end{equation*}
where $\bm{s}=(s_x,s_y,s_z)$ is a three dimensional stretch factor such that this holds:
\begin{equation*}
    \overset{\scriptscriptstyle{(1,1,\overline{\gamma})}}{\text{diam}}(S)=\text{diam}(\overline{S})
    \quad\text{and}\quad
    \overset{\scriptscriptstyle{(1,1,\overline{\gamma})}}{\text{dist}}(\bm{x},S)=\text{dist}(\overline{\bm{x}},\overline{S}).
\end{equation*}
Therefore, the stretched admissibility condition for the relativistic kernel can be defined with
\begin{equation}\label{eq:stretched_admissibility}
    \dfrac{
        \overset{\scriptscriptstyle{(1,1,\overline{\gamma})}}{\text{diam}}(S)}{\overset{\scriptscriptstyle{(1,1,\overline{\gamma})}}{\text{dist}}(\bm{x},S)}\leq \eta.
\end{equation}

\section{An interpolation-based treecode for evaluating the relativistic space-charge field}
When the distance from the particle cluster to the target point $\text{dist}(x,S)$ fulfills an admissibility condition, the separable approximation to the kernel function through the Lagrangian interpolation can be applied and the space-charge field of the particle cluster $S$ can be computed approximately by
\begin{align}\label{eqs:cluster_field}
\begin{split}
    \sum_{j\in\widehat{S}}\bm{E}(\bm{x}, \bm{x}_j)&\approx 
    \dfrac{q}{4\pi\epsilon_0}\sum_{\bm{\nu}}\sum_{j\in \widehat{S}}\ell_{S,\bm{\nu}}(\bm{x}_j)\gamma_{j}\bm{g}(\bm{x}, \bm{\xi}_{\bm{\nu}})=\dfrac{q}{4\pi\epsilon_0}\sum_{\bm{\nu}}\gamma_{S,\bm{\nu}}\cdot \bm{g}(\bm{x}, \bm{\xi}_{\bm{\nu}}),\\
    \sum_{j\in\widehat{S}}\bm{B}(\bm{x}, \bm{x}_j)&\approx 
    \dfrac{q}{4\pi\epsilon_0 c_0}
    \sum_{\bm{\nu}}\sum_{j\in \widehat{S}}\ell_{S,\bm{\nu}}(\bm{x}_j)\bm{p}_{j}\times\bm{g}(\bm{x}, \bm{\xi}_{\bm{\nu}})=\dfrac{q}{4\pi\epsilon_0 c_0}\sum_{\bm{\nu}}\bm{p}_{S,\bm{\nu}}\times \bm{g}(\bm{x}, \bm{\xi}_{\bm{\nu}}).
\end{split}
\end{align}
Here, the quantities
\begin{equation}\label{eq:effective_quantities_macroparticle}
    \gamma_{S,\bm{\nu}}:=\sum_{j\in \widehat{S}}\ell_{S,\bm{\nu}}(\bm{x}_j)\gamma_{j}
    \quad\text{and}\quad
    \bm{p}_{S,\bm{\nu}}:=\sum_{j\in \widehat{S}}\ell_{S,\bm{\nu}}(\bm{x}_j)\bm{p}_{j}
\end{equation}
are the effective Lorentz factor and the effective momentum of the macro particle with a position $\bm{\xi}_{\bm{\nu}}$ in the cluster $S$. 

Based on~\eqref{eqs:cluster_field}, we can formulate a treecode for computing the relativistic space-charge field.
When the treecode is applied with the stretched admissibility condition, it is advantageous to adjust the procedure to determine the splitting direction. As implied by the stretched diameter in~\eqref{eq:stretched_admissibility}, the splitting coordinate direction should be determined by the stretched bounding box. Therefore, in the implementation of the treecode with the stretched admissibility condition, a modified strategy to determine the splitting direction is used (\hyperref[alg:direction4split_stretching]{Algorithm~\ref*{alg:direction4split_stretching}} with $\bm{s}=(1,1,\overline{\gamma})$).

The formulated treecode (\hyperref[alg:treecode]{Algorithm~\ref*{alg:treecode}}) consists of the two main procedures:
\begin{compactenum}
    \item A given root particle cluster is subdivided into a hierarchy of smaller clusters (\hyperref[alg:cluster_subdivision]{Algorithm~\ref*{alg:cluster_subdivision}}). The splitting direction is determined by the biggest length of the bounding box (\hyperref[alg:direction4split_stretching]{Algorithm~\ref*{alg:direction4split_stretching}} with $\bm{s}=(1,1,1)$) or the stretched bounding box (\hyperref[alg:direction4split_stretching]{Algorithm~\ref*{alg:direction4split_stretching}} with $\bm{s}=(1,1,\overline{\gamma})$). 
    \item In the evaluation of the field on each target point (\hyperref[alg:cluster2p]{Algorithm~\ref*{alg:cluster2p}}), an independent traversal of the cluster tree is performed to determine the candidates of the interaction cluster. When a target point to a non-leaf cluster fulfills the conventional admissibility condition (\hyperref[alg:stretched_admissible]{Algorithm~\ref*{alg:stretched_admissible}} with $\bm{s}=(1,1,1)$) or the stretched admissibility condition (\hyperref[alg:stretched_admissible]{Algorithm~\ref*{alg:stretched_admissible}} with $\bm{s}=(1,1,\overline{\gamma})$), the effective quantities are used to compute the space-charge field~\eqref{eqs:cluster_field}. Note that, instead of \eqref{eq:definition_diam_dist}, we adapt a different definition to compute the diameter and the distance for the admissibility condition in \hyperref[alg:stretched_admissible]{Algorithm~\ref*{alg:stretched_admissible}} because of its simplicity in the practical implementation. 
\end{compactenum}

To illustrate the effect of the stretch on the particle-clustering of treecode, we performed 2D simulations for particles uniformly randomly distributed over the unit square $[0,1]^2$. The result is demonstrated in \hyperref[fig:clustering]{Figure~\ref*{fig:clustering}}.
\begin{figure}[H]
\centering
\begin{subfigure}[b]{0.49\textwidth}
    \includegraphics[width=\linewidth]{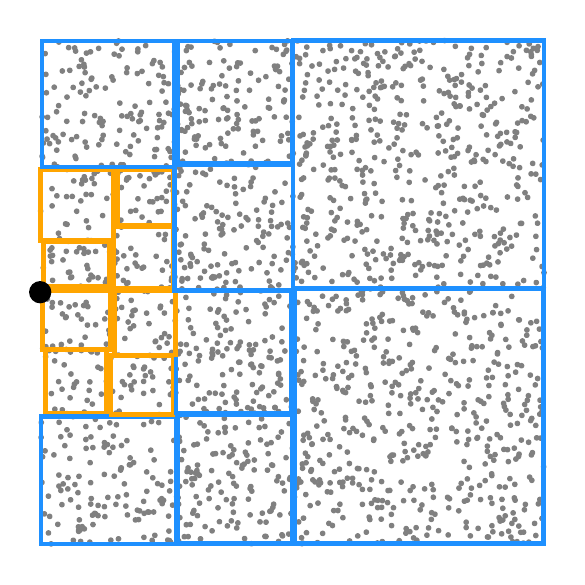}
    \caption{\label{fig:clustering_a}non-stretched}
\end{subfigure}
\begin{subfigure}[b]{0.49\textwidth}
    \includegraphics[width=\linewidth]{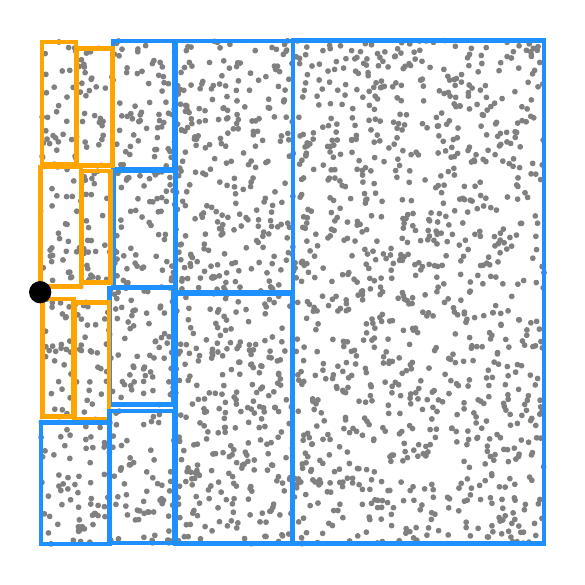}
    \caption{\label{fig:clustering_b}stretched}
\end{subfigure}
\caption{Clustering of 2D particles (gray dots) with respect to a target point (black dot) by using (a) treecode without stretch and (b) treecode with a stretch by a factor of $2.5$ in the horizontal direction (\emph{i.e.,} $\bm{s}=(2.5,1)$). The bounding box of the leaf clusters and the admissible clusters (from different levels) are marked with orange and blue, respectively. The simulation is performed with admissibility parameter $\eta=0.5$, total number of particles $N=2048$ and maximum number of particles in the leaf clusters $N_{0}=32$.}
\label{fig:clustering}
\end{figure}

\begin{algorithm}\caption{Treecode with Stretch}\label{alg:treecode}
$S$: particle cluster\\
$\bm{s}$: stretch factor\\
$N_0$: maximum number of particles in the leaf
node\\
$\eta$: admissibility parameter\\
$\bm{E}_i, \bm{B}_i$: the space-charge field experienced by the $i$-th particle\\
\SetKwProg{Func}{Function}{}{end}
\Func{\normalfont{TreecodeStretch}$(S,\,\bm{s},\,N_{0},\,\eta)$}{
    \normalfont{subdivide}$(S,\,\bm{s},\,N_{0})$ (\hyperref[alg:cluster_subdivision]{Algorithm~\ref*{alg:cluster_subdivision}})\\
    $\bm{E}_i=\bm{0}, \bm{B}_i=\bm{0}\quad \forall i\in \widehat{S}$ \\
    \For{$i\in \widehat{S}$}{
        $\bm{E}_i, \bm{B}_i = \text{cluster2p}(\bm{x}_i,\,S,\,\bm{s},\,\eta)$ (\hyperref[alg:cluster2p]{Algorithm~\ref*{alg:cluster2p}})\\
    }
    \Return $\{\bm{E}_i \mid i\in\widehat{S}\}$, $\{\bm{B}_i \mid i\in\widehat{S}\}$
}
\end{algorithm}

\begin{algorithm}\caption{Subdivision of Particle Cluster}\label{alg:cluster_subdivision}
$\gamma_{S,\bm{\nu}}$: effective Lorenz factor for cluster (global variable)~\eqref{eq:effective_quantities_macroparticle}\\ $\bm{p}_{S,\bm{\nu}}$: effective momentum for cluster (global variable)~\eqref{eq:effective_quantities_macroparticle}\\
$S$: particle cluster\\
$\bm{s}$: stretch factor\\
$N_0$: maximum number of particles in the leaf node\\
\SetKwProg{Func}{Function}{}{end}
\Func{\normalfont{subdivide}$(S,\,\bm{s},\,N_0)$}{
  \eIf{$|S|>N_0$}{
    compute $\gamma_{S,\bm{\nu}}$ and $\bm{p}_{S,\bm{\nu}}$ using ~\eqref{eq:effective_quantities_macroparticle}\\
    $k=\text{direction4split}(S,\bm{s})$ (\hyperref[alg:direction4split_stretching]{Algorithm~\ref*{alg:direction4split_stretching}})\\
    $k_{\text{split}}=$ $\lfloor |S|/2\rfloor$-th largest element of $\{k_i\mid i\in\widehat{S}\}$\\
    $S_1=\{\bm{x}_i \mid k_i\leq k_{\text{split}}, i\in\widehat{S}\}$\\
    $S_2=\{\bm{x}_i \mid k_i> k_{\text{split}}, i\in\widehat{S}\}$\\
    children(S) $= \{S_1,\,S_2\}$\\
    subdivide($S_1,\,\bm{s},\,N_0$)\\
    subdivide($S_2,\,\bm{s},\,N_0$)\\
  }{
    children(S) $= \emptyset$
  }
}
\end{algorithm}

\begin{algorithm}\caption{Direction for the Split of Particle Cluster with Stretch}\label{alg:direction4split_stretching}
$S$: particle cluster\\
$\bm{s}$: stretch factor\\
\SetKwProg{Func}{Function}{}{end}
\Func{\normalfont{direction4split}$(S,\,\bm{s})$}{
    $(s_x,s_y,s_z)=\bm{s}$\\
    $[a_x,b_x]\times[a_y,b_y]\times[a_z,b_z]=\text{bbox}(S)$\\
    $k = \underset{i\in\{x,y,z\}}{\mathrm{argmax}}\,\,s_i(b_i - a_i)$\\
    \Return $k$
}
\end{algorithm}

\begin{algorithm}\caption{Cluster Field Evaluation}\label{alg:cluster2p}
$\gamma_{S,\bm{\nu}}$: effective Lorenz factor for cluster (global variable)~\eqref{eq:effective_quantities_macroparticle}\\ $\bm{p}_{S,\bm{\nu}}$: effective momentum for cluster (global variable)~\eqref{eq:effective_quantities_macroparticle}\\
$\bm{x}$: position of the target point \\
$S$: particle cluster\\
$\bm{s}$: stretch factor\\
$\eta$: admissibility parameter\\
\SetKwProg{Func}{Function}{}{end}
\Func{\normalfont{cluster2p}$(\bm{x},\,S,\,\bm{s},\,\eta)$}{
  $\bm{E}=\bm{0},\,\bm{B}=\bm{0}$\\
  $\text{isAdmissible}=\text{admissible}(\bm{x},S,\,\bm{s},\,\eta)$ (\hyperref[alg:stretched_admissible]{Algorithm~\ref*{alg:stretched_admissible}})\\
  \uIf{\normalfont{children}$(S) == \emptyset$}{
    \For{$j \in \widehat{S}$}{
      $\bm{E}=\bm{E} + \gamma_j \bm{k}(\bm{x},\bm{x}_j,\bm{p}_j)$\\
      $\bm{B}=\bm{E} + \bm{p}_j \times \bm{k}(\bm{x},\bm{x}_j,\bm{p}_j)$\\
    }
  }
  \uElseIf{\text{\normalfont{isAdmissible}}}{
    $\bm{E}=\bm{E} + \sum\limits_{\bm{\nu}}\gamma_{S,\bm{\nu}}\cdot\bm{g}(\bm{x},\bm{\xi}_{\bm{\nu}})$ \\
    $\bm{B}=\bm{B} + \sum\limits_{\bm{\nu}}\bm{p}_{S,\bm{\nu}}\times\bm{g}(\bm{x},\bm{\xi}_{\bm{\nu}})$ \\
  }
  \Else{
    \For{$s \in \text{\normalfont{children}}(S)$}{
      $\bm{E}_{\text{temp}}$, $\bm{B}_{\text{temp}}=\text{cluster2p}(\bm{x},\,S,\,\bm{s},\,\eta)$\\
      $\bm{E}=\bm{E} + \bm{E}_{\text{temp}}$ \\
      $\bm{B}=\bm{B} + \bm{B}_{\text{temp}}$ \\
    } 
  }
  \Return $\bm{E}$, $\bm{B}$
}
\end{algorithm}

\begin{algorithm}\caption{Stretched Admissibility Condition}\label{alg:stretched_admissible}
$\bm{x}$: position of the target point\\
$S$:  particle cluster\\
$\bm{s}$: stretch factor\\
$\eta$: admissibility parameter\\
\SetKwProg{Func}{Function}{}{end}
\Func{\normalfont{admissible}$(\bm{x},\,S,\bm{s},\eta)$}{
   $[a_x,b_x]\times[a_y,b_y]\times[a_z,b_z]=\text{bbox}(S)$\\
   $\bm{c} = (\bm{a} + \bm{b})/2$\\
   $\text{diam} = \Vert \bm{s}\circ(\bm{b}-\bm{c})\Vert_2$ \\
   $\text{dist} = \Vert \bm{s}\circ(\bm{x}-\bm{c})\Vert_2$ \\
   \Return $\text{diam}/\text{dist} \leq \eta$
}
\end{algorithm}

\section{An alternative method based on relativity transformation}
In the previous section we derived an upper bound of the interpolation error for the relativistic interaction kernel and introduced a stretched admissibility condition. In this section we introduce an alternative approach based on the relativity transformation. With this approach, the treecode can be performed without switching to a stretched admissibility condition. We know that the necessity of the stretched admissibility condition comes from the average particle momentum term $\overline{\bm{p}}$ in the denominator of the relativistic kernel~\eqref{eq:kernel_g}. If some transformation can be applied to eliminate the average momentum term, we can just use treecode with the conventional admissibility condition. The idea is to transform every physical quantity to the average rest-frame (AVGRF, the inertial frame moving with particle beam's average momentum) through the Lorentz transformation and evaluate the particle interaction in the frame. In the average rest-frame, the interaction kernel is approximately equal to the electrostatic kernel and treecode can be directly performed with the conventional admissibility condition. 

\subsection{Particle interaction in the average rest-frame}
Let $\overline{\bm{p}}$ be the average momentum of particles in the system in the lab-frame $\mathcal{K}$ and $\mathcal{K}'$ be the average rest-frame, in which the average particle momentum is $\overline{\bm{p}}'=0$. According to \eqref{eq:space_time_transformation} and \eqref{eq:4momentum_transformation} with $\bm{p}_{u}=\overline{\bm{p}}$, the position and the momentum in $\mathcal{K}$ and $\mathcal{K}'$ have the relations
\begin{align}
  \label{eq:position_average_frame}
  x'_{\parallel} &= \overline{\gamma}x_{\parallel} -\overline{p}(c_0t), &
  x'_{\perp} &= x_{\perp},\\
  \label{eq:momentum_average_frame}
  p'_{\parallel} &= \overline{\gamma}p_{\parallel} - \overline{p}\gamma, &
  p'_{\perp} &= p_{\perp},
\end{align}
where $\parallel$ and $\perp$ denote the components parallel and perpendicular to $\overline{\bm{p}}$. Here, we use $\overline{p}$ to denote the magnitude of $\overline{\bm{p}}$ (\emph{i.e,} $\overline{p}:=\Vert\overline{\bm{p}}\Vert_2$). 

The momentum of a particle can be written as $\bm{p} = \overline{\bm{p}} + \Delta\bm{p}$ and can also be component-wise expressed in the plane spanned by the average momentum $\overline{\bm{p}}$ and the momentum $\bm{p}$ of the particle as
\begin{equation}\label{eq:relative_momentum}
    p_{\parallel} = \overline{p} +\Delta p_{\parallel}
    \quad\text{and}\quad
    p_{\perp} = \Delta p_{\perp},
\end{equation}
where $\Delta \bm{p}$ is the deviation of the particle momentum from $\overline{\bm{p}}$ with the scalar components $\Delta p_{\parallel}$, $\Delta p_{\perp}$. Throughout this study, we assume that $\overline{p} \gg \Delta p_{\parallel}, \Delta p_{\perp}$, which is valid in the physics of particle beams.

To eliminate the momentum appearing in the denominator of the relativistic kernel, we can evaluate the particle field in $\mathcal{K}'$,
\begin{align}\label{eq:emfield_in_average_frame}
    \bm{E}'&=\dfrac{1}{4\pi\epsilon_0}\dfrac{\gamma'\bm{R}'}{\bigl| \Vert\bm{R}'\Vert^2_2 + \left(\bm{p}'\cdot\bm{R}'\right)^2\bigr|^{3/2}}, &
    \bm{B}'&=\dfrac{1}{4\pi\epsilon_0 c_0}\dfrac{\bm{p}_j\times\bm{R}'}{\bigl| \Vert\bm{R}'\Vert^2_2 + \left(\bm{p}'\cdot\bm{R}'\right)^2\bigr|^{3/2}},
\end{align}
where $\bm{R}'$ is the \replaced[id=R3-1]{vector}{distance} between source and target particle, $\gamma'$ and $\bm{p}_j$ are the Lorentz factor and the momentum of the source particle, respectively. Let $\Delta\gamma$ denote the deviation to the average Lorentz factor $\overline{\gamma}$ of the particle, $\gamma=\overline{\gamma}+\Delta\gamma$. By~\eqref{eq:momentum_average_frame} and \eqref{eq:relative_momentum},
the particle momentum in the average rest-frame can be approximated by
\begin{equation}\label{eq:approx_rest_frame_momentum}
    p'_{\parallel}
    =\overline{\gamma} (\overline{p}+\Delta p_{\parallel}) - \overline{p}(\overline{\gamma}+\Delta \gamma)
    =\overline{\gamma}\Delta p_{\parallel} - \overline{p}\Delta\gamma
    \approx \dfrac{\overline{\gamma}^2-\overline{p}^2}{\overline{\gamma}}\Delta p_{\parallel}
    =\dfrac{1}{\overline{\gamma}}\Delta p_{\parallel}, \quad
    p'_{\perp} = \Delta p_{\perp},
\end{equation}
where $\Delta\gamma$ is estimated using the identites $\gamma^2 = 1+\bm{p}\cdot\bm{p}$ and $\overline{\gamma}^2=1+\overline{p}^2$:
\begin{alignat*}{2}
    &&(\overline{\gamma}+\Delta\gamma)^2 &= 1 + (p_{\parallel}+\Delta p_{\parallel})^2 + (\Delta p_{\perp})^2 \\
    \Longrightarrow\quad&&\overline{\gamma}^2+2\overline{\gamma}\Delta\gamma + (\Delta\gamma)^2 &= 1 + \overline{p}^2+2\overline{p}\Delta p_{\parallel} + (\Delta p_{\parallel})^2 + (\Delta p_{\perp})^2\\
    \Longrightarrow\quad&&\overline{\gamma}^2+2\overline{\gamma}\Delta\gamma &\approx 1 + \overline{p}^2+2\overline{p}\Delta p_{\parallel}\\
    \Longrightarrow\quad&&\Delta\gamma&\approx\dfrac{\overline{p}}{\overline{\gamma}}\Delta p_{\parallel}.
\end{alignat*}
Substituting~\eqref{eq:approx_rest_frame_momentum} into~\eqref{eq:emfield_in_average_frame}, the denominator in \eqref{eq:emfield_in_average_frame} can be approximated by
\begin{align*}
    \bigl| \Vert\bm{R}'\Vert^2_2 + \left(\bm{p}'\cdot\bm{R}'\right)^2
    \bigr|^{3/2}
    \approx 
        \bigl|R'^{2}_{\parallel}+R'^2_{\perp} + (\dfrac{1}{\overline{\gamma}}\Delta p_{\parallel}R'_{\parallel} + \Delta p_{\perp}R'_{\perp})^2\bigr|^{3/2}
    \approx \Vert\bm{R}'\Vert^3_2,
\end{align*}
where the assumption $\frac{\Delta p_{\parallel}}{\overline{\gamma}},\,\Delta p_{\perp}\ll 1$ was used. Therefore, the particle field in $\mathcal{K}'$ is approximately equal to
\begin{equation}\label{eq:emfield_in_average_frame_approx}
    \bm{E}' \approx\dfrac{1}{4\pi\epsilon_0}\dfrac{\gamma'\bm{R}'}{\Vert\bm{R}'\Vert^3_2},\quad
    \bm{B}' \approx\dfrac{1}{4\pi\epsilon_0 c_0}\dfrac{\bm{p}_j\times\bm{R}'}{\Vert\bm{R}'\Vert^3_2}.
\end{equation}
After $\bm{E}'$ and $\bm{B}'$ is computed by the treecode, the corresponding particle field in $\mathcal{K}$ can be calculated with the field transformation~\cite{jackson1999classical}
\begin{equation}\label{eq:emfield_to_lab-frame}
    \bm{E}=\overline{\gamma}\bm{E}'-\overline{\bm{p}}\times\bm{B}'-\dfrac{1}{\overline{\gamma}+1}(\overline{\bm{p}}\cdot\bm{E}')\overline{\bm{p}},\quad
    \bm{B}=\overline{\gamma}\bm{B}'+\overline{\bm{p}}\times\bm{E}'-\dfrac{1}{\overline{\gamma}+1}(\overline{\bm{p}}\cdot\bm{B}')\overline{\bm{p}}.
\end{equation}
The procedure discussed above is summarized in \hyperref[alg:treecode_AVGRF]{Algorithm~\ref*{alg:treecode_AVGRF}} and a schematic of the procedure is illustrated in \hyperref[fig:treecode-avgrf_schematic]{Figure~\ref*{fig:treecode-avgrf_schematic}}. In the practical implementation, instead of~\eqref{eq:position_average_frame}, we use
\begin{equation}\label{eq:position_average_frame_practice}
    x'_{\parallel}=\overline{\gamma}x_{\parallel}
    \quad\text{and}\quad
    x'_{\perp} = x_{\perp}
\end{equation}
to compute the particle position in the average rest-frame. Because the lab-frame position of particles is given at a common simulation time $t$, both~\eqref{eq:position_average_frame} and~\eqref{eq:position_average_frame_practice} lead to the same pair-wise distance between particles at the average-rest frame. Therefore, equations~\eqref{eq:position_average_frame_practice} can be applied in the computation of particle field by treecode, where only the relative position of particles is of importance.

\begin{figure}[H]
    \centering
    \includegraphics[width=0.8\linewidth]{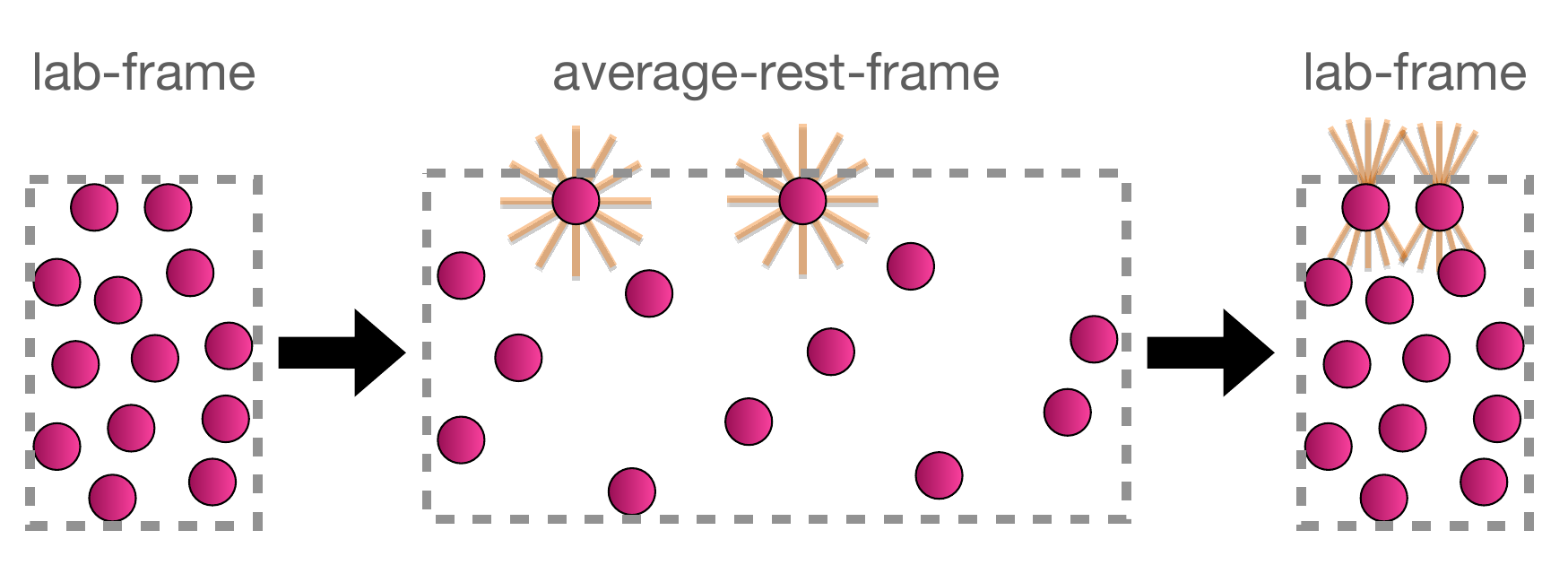}
    \caption{A schematic of the treecode based on the transformation to the average rest-frame. The particle's position and momentum are transformed to the average rest-frame and the particle field evaluated in the average rest-frame is then transformed back to the lab-frame.}
    \label{fig:treecode-avgrf_schematic}
\end{figure}

\begin{algorithm}\caption{Treecode with Average Rest-Frame Technique}\label{alg:treecode_AVGRF}
S: particle cluster\\
$\bm{s}$: stretch factor\\
$N_0$: maximum number of particles in the leaf node\\
$\eta$: admissibility parameter\\
$\bm{E}_i, \bm{B}_i$: the space-charge field experienced by the $i$-th particle\\
\SetKwProg{Func}{Function}{}{end}
\Func{\normalfont{TreecodeAVGRF}$(S,\,N_{0},\,\eta)$}{
    $\bm{s}=(1,1,1)$\\
    $S'=\text{lab2avgframe}(S)$ (\hyperref[alg:lab2avgframe]{Algorithm~\ref*{alg:lab2avgframe}})\\ 
    \normalfont{subdivide}$(S',\,\bm{s},\,N_0)$ (\hyperref[alg:cluster_subdivision]{Algorithm~\ref*{alg:cluster_subdivision}})\\
    $\bm{E}_i=\bm{0}, \bm{B}_i=\bm{0}\quad \forall i\in \widehat{S}$ \\
    \For{$i\in \hat{S'}$}{
        $\bm{E}'_{\text{temp}}, \bm{B}'_{\text{temp}} = \text{cluster2p}(\bm{x}'_i,\,S'\,\bm{s},\,\eta)$ (\hyperref[alg:cluster2p]{Algorithm~\ref*{alg:cluster2p}})\\
        compute $\bm{E}_i, \bm{B}_i$ using $\bm{E}'_{\text{temp}}, \bm{B}'_{\text{temp}}$ and \eqref{eq:emfield_to_lab-frame}
    }
    \Return $\{\bm{E}_i \mid i\in\widehat{S}\}$, $\{\bm{B}_i \mid i\in\widehat{S}\}$
}
\end{algorithm}

\begin{algorithm}\caption{Transformation of Particle Beam to the Average Rest-Frame}
\label{alg:lab2avgframe}
S: particle cluster\\
\SetKwProg{Func}{Function}{}{end}
\Func{\normalfont{lab2avgframe}$(S)$}{
    $\overline{\bm{p}}=\tfrac{1}{|S|}(\sum_{i\in\widehat{S}}\bm{p}_i)$\\
    $\overline{\gamma}=(1+\overline{\bm{p}}\cdot\overline{\bm{p}})^{1/2}$\\
    \For{$i\in\widehat{S}$}{
        $\bm{x}'_i=\bm{x}_i+\tfrac{1}{\overline{\gamma}+1}(\bm{x}_i\cdot \overline{\bm{p}})\overline{\bm{p}}$ (vector form of~\eqref{eq:position_average_frame_practice})\\
        $\bm{p}'_i=\bm{p}_i + \tfrac{1}{\overline{\gamma}+1}(\bm{p}_{i}\cdot\overline{\bm{p}})\overline{\bm{p}} - \gamma_{i}\overline{\bm{p}}$ (vector form of~\eqref{eq:momentum_average_frame})\\
        $\gamma'_{i}=(1+\bm{p}_{i}\cdot\bm{p}_{i})^{1/2}$\\
    }
    \Return $S'=\{\bm{x}'_i\, | \,\forall i \in \widehat{S} \}$
}
\end{algorithm}

\section{Results and discussion}
To understand the performance of the different treecodes discussed in this study, we performed numerical simulations for particle beams with different configurations. All the treecodes and algorithms discussed in this study have been implemented as a package \added[id=A-3]{\href{https://github.com/ykkan/NChargedBodyTreecode.jl}{NChargedBodyTreecode.jl}} in the Julia programming language~\cite{bezanson2017julia}. We refer to the procedure proposed by Wang \emph{et al.}~\cite{wang2020kernel} to compute the effective Lorentz factor and the effective momentum of the macro particle: the Lagrangian polynomials are implemented based on the second form of the barycentric formula~\cite{berrut2004barycentric}, and the interpolation points (\emph{i.e.}, the positions of the macro particles) are generated by Chebyshev points of the second kind~\cite{wang2020kernel}. \deleted[id=A-3]{The source code of our package can be accessed via~\url{https://doi.org/10.5281/zenodo.6464509}}. 

\subsection{Particle beam with single momentum}
We first consider particles randomly uniformly distributed in the unit cube $[0,1]^3$ where each particle has the same momentum $\bm{p}=(0,0,p_0)$ with $p_0=(\gamma^2-1)^{1/2}$. 
The space-charge field (both E- and B-field) experienced by each particle in the system is computed by the following methods:
\begin{compactenum}
    \item brute-force method \added[id=R3-2]{where the non-approximated kernel function $\bm{k}$ in~\eqref{eq:treecode_exact_k_kernel} is used},
    \item a treecode called “Treecode-Uniform” using the conventional admissibility condition (\hyperref[alg:stretched_admissible]{Algorithm~\ref*{alg:stretched_admissible}} with $\bm{s}=(1,1,1)$),
    \item a treecode called “Treecode-Stretch” using the stretched admissibility condition and the subdivision strategy with stretch (\hyperref[alg:stretched_admissible]{Algorithm~\ref*{alg:stretched_admissible}} with $\bm{s}=(1,1,\overline{\gamma})$),
    \item a treecode called “Treecode-AVGRF” using the average rest-frame approach (\hyperref[alg:treecode_AVGRF]{Algorithm~\ref*{alg:treecode_AVGRF}}).
\end{compactenum}
The result from brute-force method serves as a reference to evaluate the speedup and the relative error of different treecodes. The measured error is the maximal relative error in the electrical and magnetic field,
\begin{equation*}
    \text{error}:=
      \max_{\bm{f}\in\{\bm{E},\bm{B}\}}
      \biggl(\sum^{N}_{i=1} \Vert \bm{f}^{t}_{i}-\bm{f}^{b}_{i} \Vert^{2}_{2} 
      / \sum^{N}_{i=i}\Vert \bm{f}^{b}_{i} \Vert^{2}_{2}\biggr)^{1/2},
\end{equation*}
where $N$ is the number of particles in the system. The space-charge fields $\bm{f}^{t}_{i}$ and $\bm{f}^{b}_{i}$ experienced by the $i$-th particle are computed by treecode and the brute-force method, respectively. The result is shown in \hyperref[fig:monomom_errors]{Figure~\ref*{fig:monomom_errors}}. As illustrated by \hyperref[fig:monomom_errors_a]{Figure~\ref*{fig:monomom_errors_a}}, Treecode-Uniform has large errors and this supports our argument in the beginning that the conventional treecode cannot be used directly for relativistic particle beams. On the other hand, as demonstrated in \hyperref[fig:monomom_errors_b]{Figure~\ref*{fig:monomom_errors_b}} and \hyperref[fig:monomom_errors_c]{Figure~\ref*{fig:monomom_errors_c}}, the two proposed treecodes can effectively treat the problem of a relativistic particle beam. Also, we can observe that the errors computed by Treecode-Stretch and Trecode-AVGRF look the same. The reason is that both methods result in equivalent ways to subdivide the particle clusters and to determine the admissible clusters. In Treecode-AVGRF, the particles' spatial distribution gets boosted along the longitudinal direction after the relativity transformation depicted in \hyperref[fig:treecode-avgrf_schematic]{\hyperref[fig:treecode-avgrf_schematic]{Figure~\ref*{fig:treecode-avgrf_schematic}}}. In Treecode-Stretch, the particles' spatial distribution stays unchanged in the lab-frame but the subdivision and the admissibility condition involve a stretch factor related to the Lorentz boost.
\begin{figure}[H]
\centering
\begin{subfigure}[b]{0.47\textwidth}
    \includegraphics[width=\linewidth]{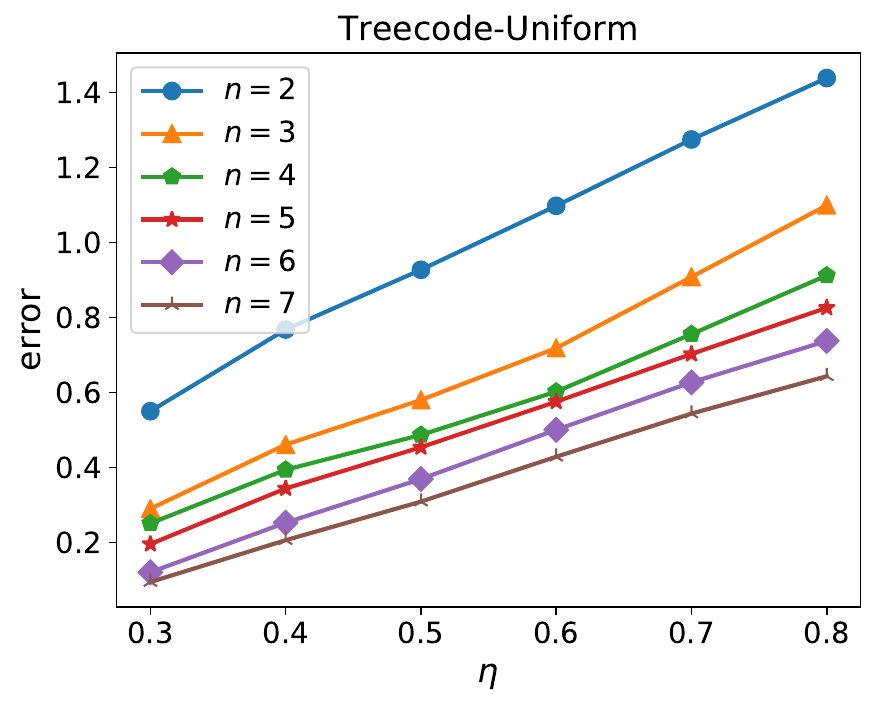}
    \caption{\label{fig:monomom_errors_a}Treecode-Uniform}
\end{subfigure}\\
\begin{subfigure}[b]{0.47\textwidth}
    \includegraphics[width=\linewidth]{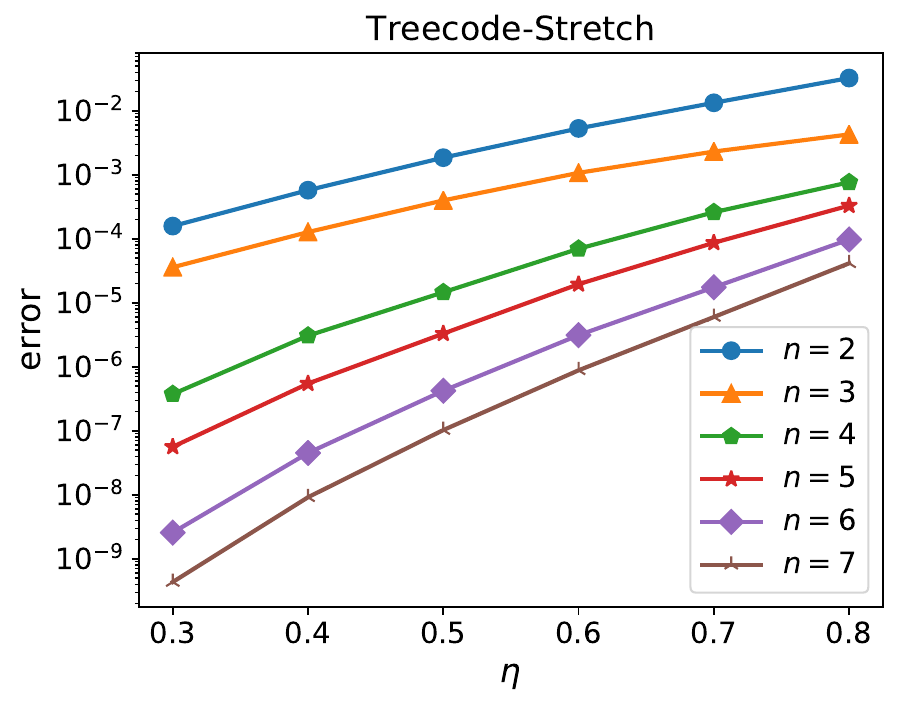}
    \caption{\label{fig:monomom_errors_b}Treecode-Stretch}
\end{subfigure}
\begin{subfigure}[b]{0.47\textwidth}
    \includegraphics[width=\linewidth]{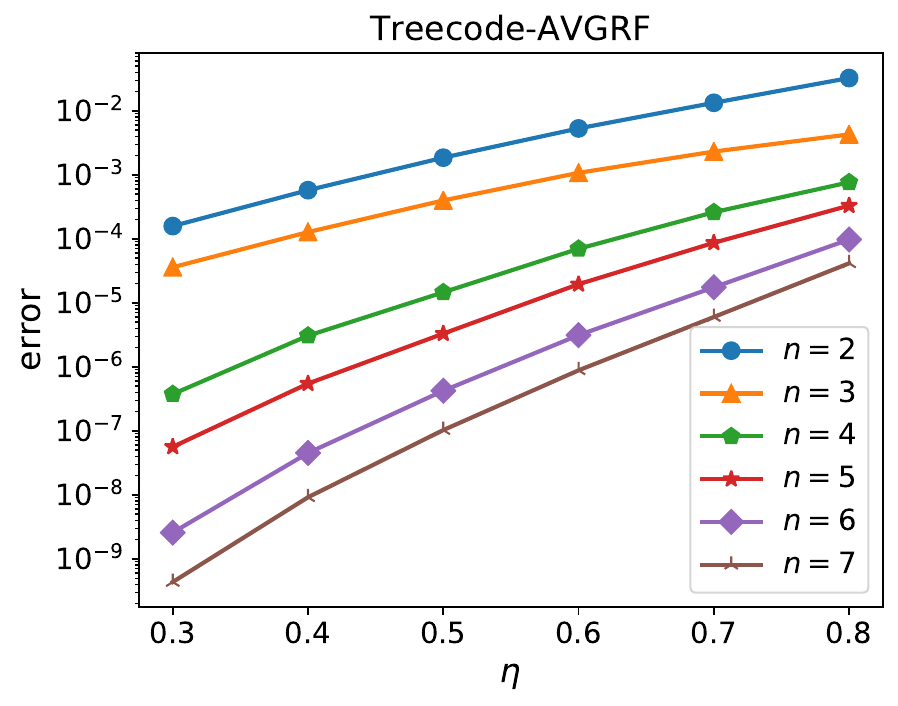}
    \caption{\label{fig:monomom_errors_c}Treecode-AVGRF}
\end{subfigure}
\caption{The accuracy of different treecodes for the particle beam with mono-momentum. The number of particles is $8\times 10^4$ and $\gamma=50$. Here, $n$ and $\eta$ are the interpolation degree and the admissibility parameter, respectively. The interpolation degree is $n=2,3,4,5,6,7$ and the maximum number of the particles in the leaf cluster is chosen as $N_0=(n+1)^3$.}
\label{fig:monomom_errors}
\end{figure}

\subsection{Particle beam with momentum spread}
In practical scenarios, beam divergence and energy spread are typically present in particle beams. These two quantities are related to the momentum distribution of a particle beam in transverse and longitudinal direction, respectively. Therefore it is also important to know how the momentum distribution influences the accuracy of treecodes. We consider a particle beam randomly distributed over a momentum distribution defined as 
\begin{align*}
    f_{\bm{p}}(p_x,p_y,p_z)\propto
    \exp\left(-\dfrac{p^2_x + p^2_y}{2(\delta_{\perp}p_0)^2}\right)
    \exp\left(-\dfrac{(p_z-p_0)^2}{2(\delta_{\parallel}p_0)^2}\right)
    \text{ with }
    p_0=(\gamma^2-1)^{1/2}.
\end{align*} 
Here, $\delta_{\perp}$ and $\delta_{\parallel}$ are equivalent to rms (root mean square) divergence and rms energy spread for a relativistic particle beam with the paraxial approximation (\emph{i.e.}, $p_{\{x,y\}}\ll p_0$). The positions of the particles are uniformly randomly distributed over the unit cube $[0,1]^3$. The result is shown in \hyperref[fig:momspread_errors]{Figure~\ref*{fig:momspread_errors}}. Our numerical results reveal that treecode-AVGRF is less accurate than Treecode-Stretch when a momentum spread is considered in the particle beam. 
\begin{figure}[H]
\centering
\begin{subfigure}[b]{0.49\textwidth}
    \includegraphics[width=\linewidth]{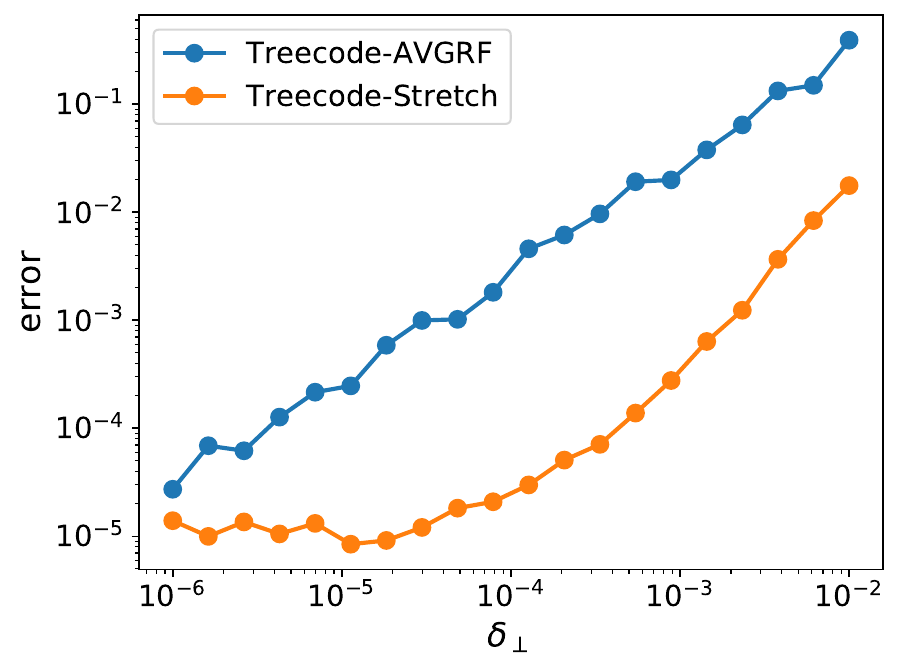}
    \caption{transverse}
\end{subfigure}
\begin{subfigure}[b]{0.49\textwidth}
    \includegraphics[width=\linewidth]{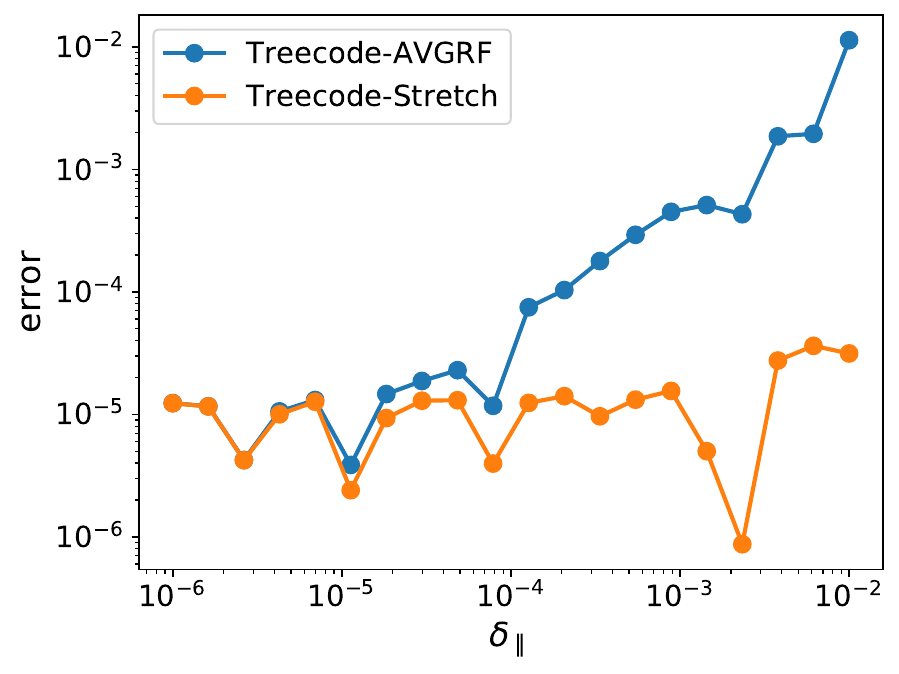}
    \caption{longitudinal}
\end{subfigure}
\caption{The accuracy of Treecode-AVGRF and Treecode-Stretch for particle beams with (a) transverse momentum spread and (b) longitudinal momentum spread. The number of particles is $8\times 10^4$ and $\gamma=50$. Each simulation is performed with admissibility parameter $\eta=0.5$, interpolation degree $n=4$, and maximum number of particles in the leaf clusters $N_0=256$.}
\label{fig:momspread_errors}
\end{figure}

\subsection{The problem of Treecode-AVGRF}
Our numerical results show that Treecode-AVGRF is less accurate compared to Treecode-Stretch when a momentum spread is considered in the particle beam. The reason is due to the inaccurate computation of the target-source distance in the average rest-frame. To explain this, we consider a particle moving with a momentum $p_j$ in the lab-frame $\mathcal{K}$ (\hyperref[fig:avgrf_problem_a]{Figure~\ref*{fig:avgrf_problem_a}}). If the positions of the particle $x_j$ and the target point $x$ are both measured at the same time $t_{x_j}=t_x = t$, the distance between these two points can be calculated directly using $x-x_j$ . If we transform the system to a reference frame $\mathcal{K}'$ moving with a momentum $\overline{p}$ (\hyperref[fig:avgrf_problem_b]{Figure~\ref*{fig:avgrf_problem_b}}), the events observed in $\mathcal{K}'$ will become
\begin{equation*}
    \begin{cases}
        x'=\overline{\gamma}x - \overline{p}t,\\
        t'_{x}=\overline{\gamma}t-\overline{p}x,\\
    \end{cases}
    \quad
    \begin{cases}
        x'_j= \overline{\gamma}x_j - \overline{p}t,\\
        t'_{x_j}=\overline{\gamma}t-\overline{p}x_{j},\\
    \end{cases}
    \quad\text{and}\qquad
    p'_j \approx \overline{p}\Delta p_j.
\end{equation*}
Here, we assume $p_j=\overline{p}+\Delta p_j$ with $\overline{p}\gg \Delta p_j$ and the space-time quantities are normalized such that the speed of light is $c_0=1$. It is problematic if we compute the distance directly by $x' - x'_j$ because these two events are measured at different times (\emph{i.e.}, $t'_{x}\neq t'_{x_j}$) and the particle keeps moving during the measurement. To get a correct distance, as demonstrated in \hyperref[fig:avgrf_problem_c]{Figure~\ref*{fig:avgrf_problem_c}}, we need to transform the system to the particle's rest-frame $\mathcal{K}''$ that satisfies
\begin{equation*}
    \begin{cases}
        x''=\gamma_j x - p_j t,\\
        t''_{x}=\gamma_j t - p_j x,\\
    \end{cases}
    \quad
    \begin{cases}
        x''_j= \gamma_j x_j - p_j t,\\
        t''_{x_j}=\gamma_j t- p_j x_{j},\\
    \end{cases}
    \quad\text{and}\qquad
    p''_j = 0.
\end{equation*}
In the reference frame $\mathcal{K}''$, although two events are still measured at different times (\emph{i.e.}, $t''_{x}\neq t''_{x_j}$), we can still compute the distance with $x'' - x''_j$ because the particle is stationary during the measurement. However, this approach is not applicable to treecode because the transformations of the whole particle beam to each particle's rest-frame have a complexity of $\mathcal{O}(N^2)$.
\begin{figure}[th]
\centering
\begin{subfigure}[b]{0.3\textwidth}
    \includegraphics[width=\linewidth]{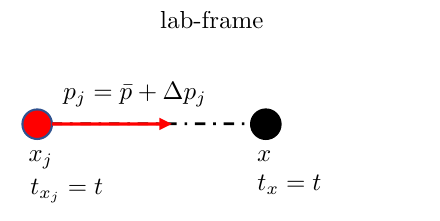}
    \caption{\label{fig:avgrf_problem_a}lab-frame $\mathcal{K}$}
\end{subfigure}
\begin{subfigure}[b]{0.3\textwidth}
    \includegraphics[width=\linewidth]{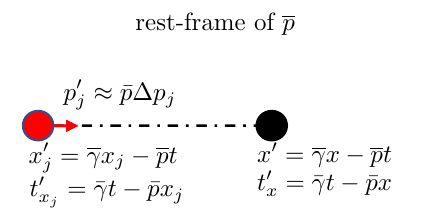}
    \caption{\label{fig:avgrf_problem_b}rest-frame $\mathcal{K}'$}
\end{subfigure}
\begin{subfigure}[b]{0.3\textwidth}
    \includegraphics[width=\linewidth]{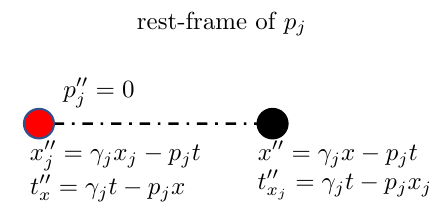}
    \caption{\label{fig:avgrf_problem_c}rest-frame $\mathcal{K}''$}
\end{subfigure}
\caption{A 1D illustration of the distance between one point (black dot) and one particle (red dot) moving with $p_j=\overline{p}+\Delta p_j$. The system is viewed in (a) the lab-frame ($\mathcal{K}$), (b) the rest-frame of $\overline{p}$ ($\mathcal{K}'$) and (c) the rest-frame of $p_j$ ($\mathcal{K}''$).}
\label{fig:avgrf_problem}
\end{figure}

\subsection{Performance}
To understand the speedup achieved with our treecode, we compare the performance of the brute-force method and the treecode in \hyperref[fig:performance]{Figure~\ref*{fig:performance}}. Here, Treeecode-Stretch is selected because it has the best accuracy in all the scenarios discussed before. \added[id=R3-3]{In this performance test, the particles are randomly uniformly distributed in the unit cube $[0,1]^3$ and each particle has the same momentum associated with $\gamma=50$. The elapsed time of treecode represents the execution time of \hyperref[alg:cluster2p]{Algorithm~\ref*{alg:treecode}}.} As shown in \hyperref[fig:performance_a]{Figure~\ref*{fig:performance_a}} and \hyperref[fig:performance_c]{Figure~\ref*{fig:performance_c}}, the brute-force method scales with $\mathcal{O}(N^2)$. The treecode scales between $\mathcal{O}(N^2)$ and $\mathcal{O}(N\log N)$ (\hyperref[fig:performance_a]{Figure~\ref*{fig:performance_a}}) and approaches to its theoretical complexity $\mathcal{O}(N\log N)$ as the number of particles $N$ becomes bigger (\hyperref[fig:performance_d]{Figure~\ref*{fig:performance_d}}). Also, we can observe that the speedup of the treecode increases with the number of particles  (\hyperref[fig:performance_b]{Figure~\ref*{fig:performance_b}}). Here, the speedup refers to the ratio of the elapsed time of the brute-force method to the elapsed time of the treecode. \added[id=R2-2]{Besides, we can observe in \hyperref[fig:performance_e]{Figure~\ref*{fig:performance_e}} that the relative error of treecode is independent of $N$.}
\begin{figure}[H]
\centering
\begin{subfigure}[b]{0.4\textwidth}
    \includegraphics[width=\linewidth]{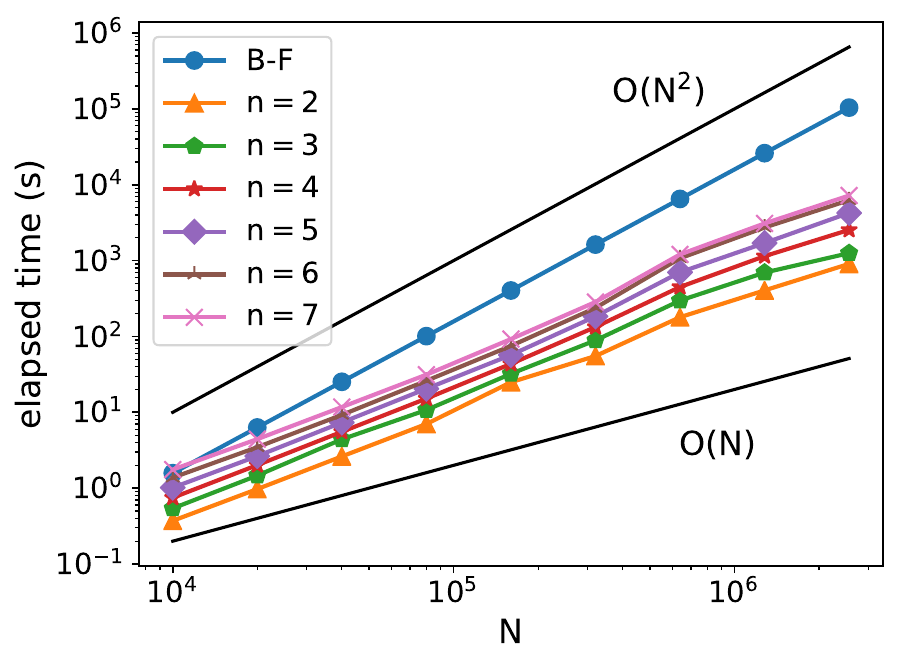}
    \caption{\label{fig:performance_a}elapsed time}
\end{subfigure}
\begin{subfigure}[b]{0.4\textwidth}
    \includegraphics[width=\linewidth]{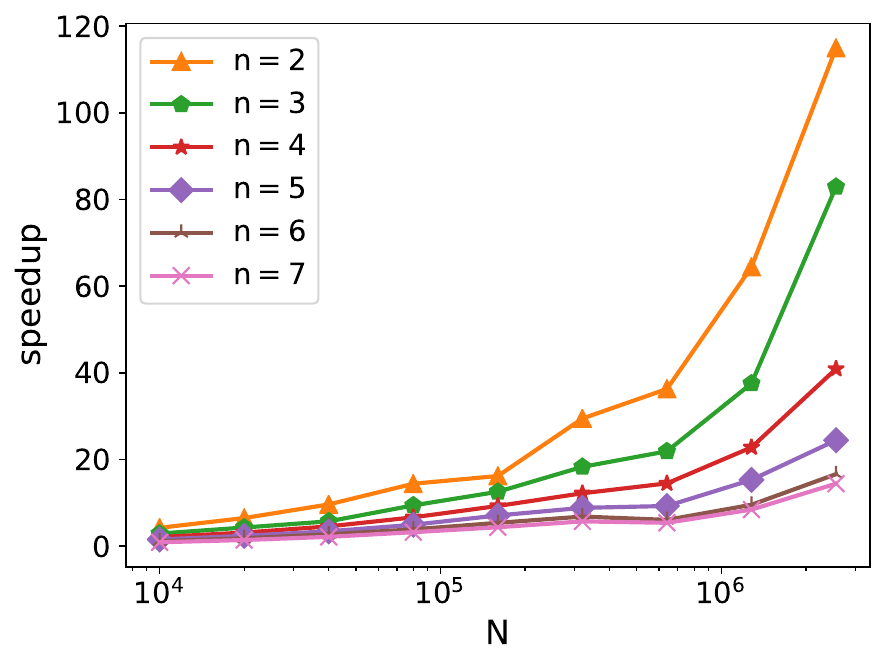}
    \caption{\label{fig:performance_b}speedup}
\end{subfigure}
\begin{subfigure}[b]{0.4\textwidth}
    \includegraphics[width=\linewidth]{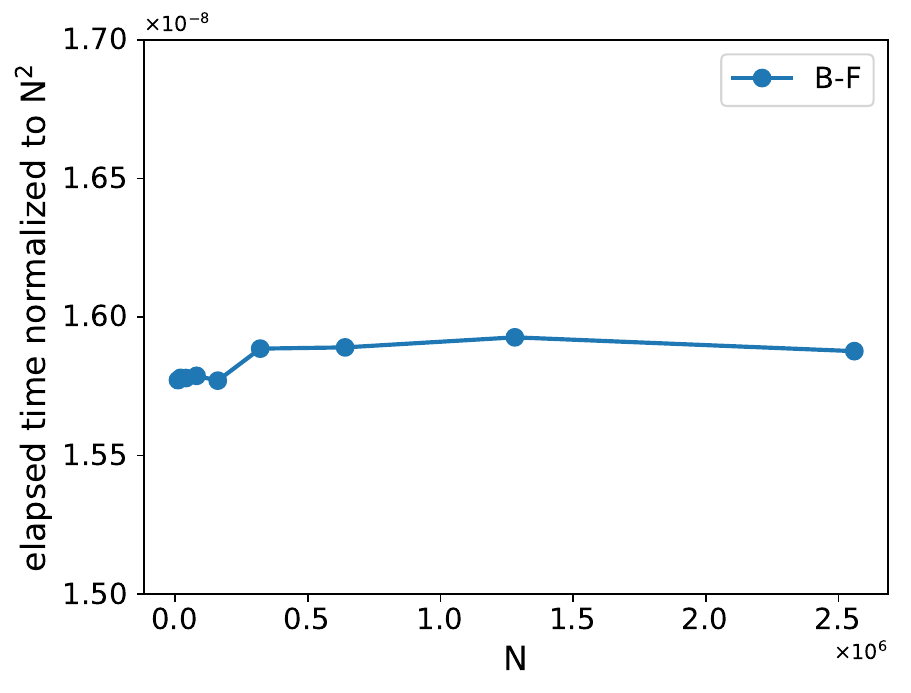}
    \caption{\label{fig:performance_c}normalized elapsed time}
\end{subfigure}
\begin{subfigure}[b]{0.4\textwidth}
    \includegraphics[width=\linewidth]{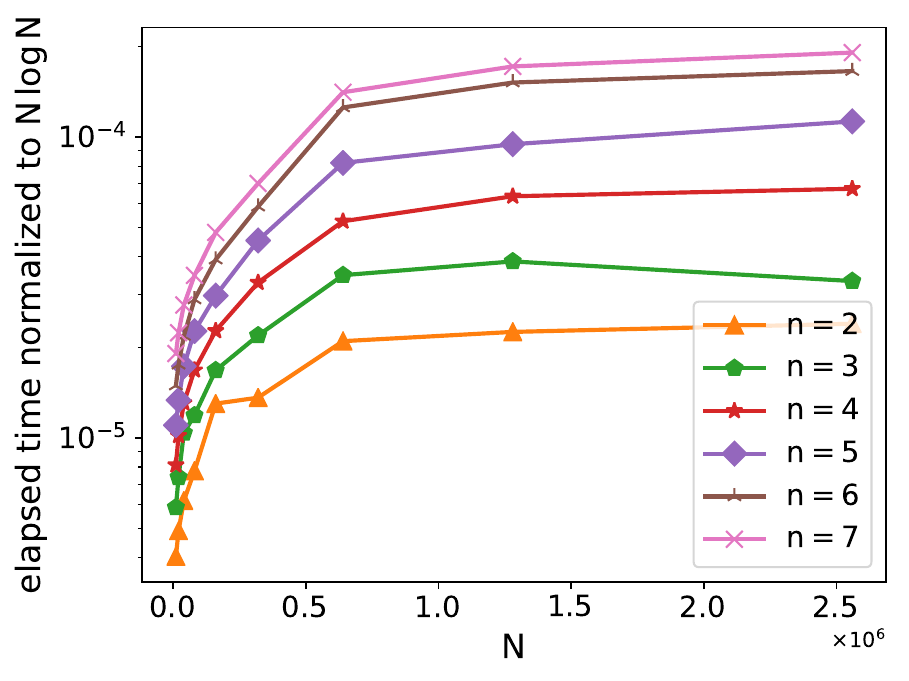}
    \caption{\label{fig:performance_d}normalized elapsed time}
\end{subfigure}
\begin{subfigure}[b]{0.4\textwidth}
    \includegraphics[width=\linewidth]{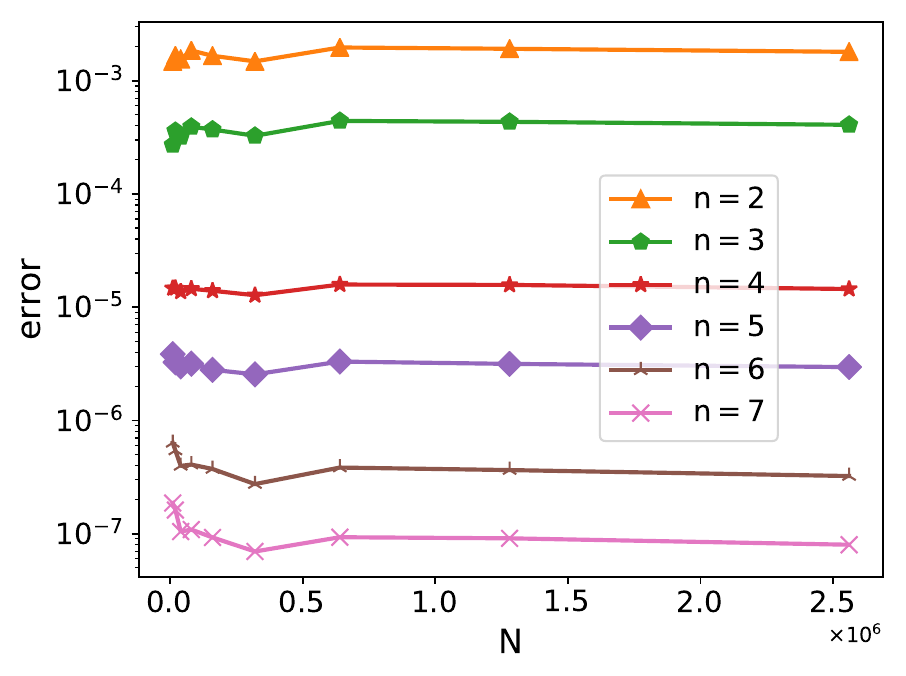}
    \caption{\label{fig:performance_e}error}
\end{subfigure}
\caption{The performance of brute-force and treecode methods. \hyperref[fig:performance_a]{Figure~\ref*{fig:performance_a}} shows the elapsed time used by brute-force method (B-F) and Treecode-Stretch to evaluate the space-charge field of increasing numbers of particles $N$. \hyperref[fig:performance_b]{Figure~\ref*{fig:performance_b}} shows the speedup of Treecode-Stretch relative to the brute-force method. \hyperref[fig:performance_c]{Figure~\ref*{fig:performance_c}} shows the elapsed time of brute-force method normalized to $N^2$. \hyperref[fig:performance_d]{Figure~\ref*{fig:performance_d}} shows the elapsed times of treecode method normalized $N\log N$. \added[id=R2-2]{\hyperref[fig:performance_e]{Figure~\ref*{fig:performance_e}} shows the relative error of treecode.} The treecode is performed with the fixed admissibility parameter $\eta=0.5$ and different interpolation degrees $n=2,3,4,5,6,7$. The maximum number of particles in the leaf cluster is chosen as $N_0=(n+1)^3$.} 
\label{fig:performance}
\end{figure}

\section{Conclusion}
In this study, based on the Lagrangian interpolation, we formulate a treecode for computing the relativistic space-charge field. We propose two approaches to control the interpolation error: Treecode-Stretch and Treecode-AVGRF. Our numerical result shows that Treecode-Stretch has better accuracy than Treecode-AVGRF for particle beams with momentum spread. Also, the performance of our treecode scales like $O(N\log N)$ as expected by the theory. Comparing to FMM, which usually relies on a sophisticated data structure, the proposed treecode is easy to implement and intrinsically parallelizable; it may be especially suitable for building in-house solvers for the study of space-charge effects from relativistic beams. For future work, we plan to extend our treecode formulation to FMM and investigate its parallelization under many-cores architectures like graphics processing units (GPU). \added[id=R2-4]{Some former works,~\emph{e.g.}, Ref.~\cite{wilson2021gpu-accelerated}, might provide a possible direction.} 

\appendix 
\section{Approximation of the denominator of the relativistic kernel function}\label{appendx:approx_kernel_denominator}
\begin{lemma}
Let $\bm{p}:=p_r\bm{e}_r + p_{z}\bm{e}_{z}$ be a vector in cylindrical coordinates. If $p_{z}\gg p_r$, we have the approximation 
\begin{equation*}
    \Vert \bm{x}\Vert^2_2 + (\bm{p}\cdot\bm{x})^2\approx r^2 + (1+p^2_z)z^2, \quad
    \text{where}\quad r=x^2 + y^2.
\end{equation*}
\end{lemma}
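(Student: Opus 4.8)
The statement is an order-of-magnitude approximation, so the plan is to keep the terms that are leading order in the small ratio $p_r/p_z$ and to discard the rest. First I would split the position vector along and across the beam axis: writing $\bm{x} = \bm{x}_\perp + z\,\bm{e}_z$ with $\bm{x}_\perp$ the transverse part, $\bm{e}_r\cdot\bm{e}_z = 0$, and $\Vert\bm{x}_\perp\Vert_2^2 = x^2 + y^2 =: r^2$, the orthonormality of the cylindrical basis gives at once the exact identity $\Vert\bm{x}\Vert_2^2 = r^2 + z^2$.

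Next I would expand the inner product. Since $\bm{p} = p_r\bm{e}_r + p_z\bm{e}_z$, Cauchy--Schwarz gives
\begin{equation*}
 \bm{p}\cdot\bm{x} = p_r\,(\bm{e}_r\cdot\bm{x}_\perp) + p_z z, \qquad |p_r\,(\bm{e}_r\cdot\bm{x}_\perp)| \leq p_r\,\Vert\bm{x}_\perp\Vert_2 = p_r r .
\end{equation*}
Squaring and collecting,
\begin{equation*}
 (\bm{p}\cdot\bm{x})^2 = p_z^2 z^2 + 2 p_z z\, p_r(\bm{e}_r\cdot\bm{x}_\perp) + \bigl(p_r(\bm{e}_r\cdot\bm{x}_\perp)\bigr)^2 = p_z^2 z^2 + O(p_r p_z\, r z) + O(p_r^2 r^2) .
\end{equation*}
Under $p_z \gg p_r$ (with the transverse and longitudinal separations both of the order of the bunch size in the intended application) both correction terms are subleading --- $O(p_r^2 r^2)$ against the retained $r^2$, and $O(p_r p_z\, r z)$ against the retained $p_z^2 z^2$ --- so $(\bm{p}\cdot\bm{x})^2 \approx p_z^2 z^2$; equivalently, to leading order $\bm{p}$ may be replaced by its longitudinal part $p_z\bm{e}_z$ inside the quadratic form. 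Adding the exact identity $\Vert\bm{x}\Vert_2^2 = r^2 + z^2$ then yields $\Vert\bm{x}\Vert_2^2 + (\bm{p}\cdot\bm{x})^2 \approx r^2 + z^2 + p_z^2 z^2 = r^2 + (1 + p_z^2)z^2$, which is the claim.

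The only delicate point is the cross term $2 p_z z\, p_r(\bm{e}_r\cdot\bm{x}_\perp)$: it carries the product $p_r p_z$, which need not be small in absolute value even when $p_r \ll p_z$, so discarding it has to be argued in the relative sense dictated by the beam geometry (so that $r$ and $z$ are comparable and the ratio $p_r r/(p_z z)$ is small) rather than from $p_z \gg p_r$ alone. I expect pinning down that estimate to be the main obstacle; the remaining steps are just the elementary orthogonal decomposition above together with expanding a square.
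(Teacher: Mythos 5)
Your setup is the same as the paper's: after the orthogonal decomposition you arrive at exactly the expansion the paper starts from, namely $\Vert\bm{x}\Vert_2^2+(\bm{p}\cdot\bm{x})^2=r^2+z^2+p_r^2r^2+p_z^2z^2+2p_rp_zrz$, and the whole content of the lemma is the disposal of the two momentum-dependent correction terms. You correctly identify the cross term as the delicate point, but you then leave it unresolved except by appealing to ``beam geometry'' with $r\sim z$. That fallback is not available here: in the paper the lemma is applied to the difference vector $\bm{x}-\bm{x}_j$ between arbitrary source and target points in a cluster, so the regime $r\gg z$ (indeed $z\to0$ with $r$ fixed, for particles at nearly the same longitudinal position) genuinely occurs and is precisely the case the paper's proof spends most of its effort on. So as written your argument covers only part of the claimed statement; this is a genuine gap rather than a stylistic difference.

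The paper closes the gap by a case analysis on $r$ versus $z$ and, within $r\gg z$, on the ratio $p_zz/(p_rr)$. The key observation you are missing is that when the cross term is \emph{not} dominated by the retained $p_z^2z^2$ (i.e.\ when $p_rr\gtrsim p_zz$), it is instead of size $2p_rp_zrz=2p_r^2r^2\cdot\frac{p_zz}{p_rr}\lesssim p_r^2r^2$, and so --- like the term $p_r^2r^2$ itself --- it must be compared against the retained $r^2$ rather than against $p_z^2z^2$. That comparison succeeds only under the additional smallness $p_r^2\ll1$, which the paper also uses implicitly (its final steps read $(1+p_r^2)r^2\approx r^2$ and $(1+3p_r^2)r^2\approx r^2$) and which holds in the intended paraxial setting but does not follow from $p_z\gg p_r$ alone; your proof needs the same hypothesis for the $O(p_r^2r^2)$ term, so you lose nothing by invoking it. With that observation the two dangling terms in your expansion are bounded by $\mathrm{const}\cdot p_r^2 r^2\ll r^2$ whenever they are not already $\ll p_z^2z^2$, and your argument becomes complete and equivalent to the paper's three-case proof.
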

\begin{proof}
In cylindrical coordinates, the function can be expressed as
\begin{equation}\label{eq:kernel_denominator}
     \Vert \bm{x}\Vert^2_2 + (\bm{p}\cdot\bm{x})^2 =
    r^2 + z^2 + p^2_{r}r^2 + p^2_{z}z^2 + 2p_{r}p_{z}rz.
\end{equation}

To find out the approximation, we analyze the term
\begin{equation}\label{eq:kernel_denominator_second}
    p^2_{r}r^2 + p^2_{z}z^2 + 2p_{r}p_{z}rz
\end{equation}
in three main different cases.

For the cases of $r \sim z$ and $r \ll z$, we first recast \eqref{eq:kernel_denominator_second} to
\begin{equation*}
    p^2_z z^2\left(\frac{p^2_r}{p^2_z}\frac{r^2}{z^2}+1+2\frac{p_r p_z}{p^2_z}\dfrac{r^2}{z^2}\right)
\end{equation*}
and we can conclude that \eqref{eq:kernel_denominator_second} can be approximated by $p^2_z z^2$.

For the case $r \gg z$, we recast \eqref{eq:kernel_denominator_second} to
\begin{equation*}
    p^2_r r^2\left(1+\frac{p^2_z}{p^2_r}\frac{z^2}{r^2}+2\frac{p_z}{p_r}\dfrac{z}{r}\right)
\end{equation*}
and consider three further scenarios: 
\begin{itemize}
    \item if $\frac{p_z}{p_r}\frac{z}{r}\gg 1$, \eqref{eq:kernel_denominator_second} is approximately equal to $p^2_z z^2$
    \item if $\frac{p_z}{p_r}\frac{z}{r}\ll 1$, \eqref{eq:kernel_denominator_second} is approximately equal to $p^2_r r^2 \approx p^2_r r^2 + p^2_z z^2$ 
    \item if $\frac{p_z}{p_r}\frac{z}{r}\sim 1$, \eqref{eq:kernel_denominator_second} is approximately equal to $p^2_r r^2(\frac{p^2_z}{p^2_r}\frac{z^2}{r^2}+3)=p^2_z z^2 + 3p^2_r r^2$
\end{itemize}

Finally, we can conclude that:
\begin{itemize}
    \item For $r\sim z$, $r\ll z$ or $r\gg z \land \tfrac{p_z z}{p_r r}\gg 1$, \eqref{eq:kernel_denominator} can be approximated by
        \begin{equation*}
            r^2+(1+p^2_z)z^2 
        \end{equation*}
    \item For $r\gg z \land \tfrac{p_z z}{p_r r}\ll 1$,         \eqref{eq:kernel_denominator} can be approximated by
        \begin{equation*}
            (1+p^2_r)r^2+(1+p^2_z)z^2\approx r^2+(1+p^2_z)z^2
        \end{equation*}
    \item For $r\gg z \land \tfrac{p_z z}{p_r r}\sim 1$
        \begin{equation*}
            (1+3p^2_r)r^2+(1+p^2_z)z^2\approx r^2+(1+p^2_z)z^2 \qedhere
        \end{equation*}
\end{itemize}
\end{proof}

\section{Some properties of special relativity}
This section summaries some consequences of special relativity, which are already discussed in the literature, for example in the textbook of classical electrodynamics~\cite{jackson1999classical}. 
Consider one inertial frame $\mathcal{K}'$ moving with a velocity $\bm{u}$ (corresponding to the momentum $\bm{p}_u$) relative to another frame $\mathcal{K}$. The space-time coordinates of an event in these two frames follow the transformation
\begin{equation}\label{eq:space_time_transformation}
    (c_{0}t') = \gamma_u (c_0t) - \bm{p}_u\cdot \bm{x},\quad
    x'_{\parallel} = \gamma_u x_{\parallel} -p_u(c_0t), \quad
    x'_{\perp} = x_{\perp}
\end{equation}
where $\parallel$ and $\perp$ denote the components parallel and perpendicular to $\bm{p}_u$. Here, we use $p_u$ to denote the magnitude of $\bm{p}_{u}$ (\emph{i.e,} $p_{u}:=\Vert\bm{p}_{u}\Vert_2$). Similar to the space-time coordinates, the four-momentum (energy and momentum) in $\mathcal{K}'$ and $\mathcal{K}$ follows the transformation 
\begin{equation}\label{eq:4momentum_transformation}
    \gamma' = \gamma_u\gamma - \bm{p}_u\cdot\bm{p}, \quad
    p'_{\parallel}=\gamma_u p_{\parallel} - p_u\gamma,\quad
    p'_{\perp} = p_{\perp}.
\end{equation}

The transformation formulas for the position and momentum above are expressed in the component-wise form because of its convenience for the theoretical derivation. These transformations can also be expressed in vector form
\begin{align}
    &\bm{x}'=\bm{x}+\dfrac{1}{\gamma_{u}+1}(\bm{x}\cdot \bm{p}_u)\bm{p}_{u} - c_{0}t\bm{p}_{u},\\
    &\bm{p}'=\bm{p} + \dfrac{1}{\gamma_{u}+1}(\bm{p}\cdot\bm{p}_{u})\bm{p}_{u} - \gamma\bm{p}_{u}.
\end{align}




\section*{Acknowledgement}
This work was supported by DASHH (Data Science in Hamburg -- HELM\-HOLTZ Graduate School for the Structure of Matter) with the Grant-No.\ HIDSS-0002 and in part by the European Research Council under the European Union's Seventh Framework Programme (FP7/2007-2013) through Synergy Grant (609920). The authors acknowledge the computational resources of the Maxwell Cluster operated at Deutsches Elektronen-Synchrotron (DESY).


\bibliographystyle{elsarticle-num}
\bibliography{references}







\end{document}